%
\documentclass{llncs}

\usepackage{makeidx}  
\usepackage[title]{appendix}
\usepackage{graphicx}
\usepackage{amsmath}
\usepackage{amssymb}
\usepackage{enumitem}
\usepackage{mathtools}
\DeclarePairedDelimiter\ceil{\lceil}{\rceil}
\DeclarePairedDelimiter\floor{\lfloor}{\rfloor}

\pagestyle{plain}

\begin{document}

\setlength{\abovedisplayskip}{-6pt}
\setlength{\belowdisplayskip}{2pt}

\mainmatter              
\title{Nakamoto Consensus with Verifiable Delay Puzzle}
\titlerunning{}  
%
\author{Jieyi Long}
\authorrunning{} 
%
\tocauthor{}
\institute{Theta Labs, San Jose CA 95128, USA\\
\email{jieyi@thetalabs.org}}

\maketitle              
\let\labelitemi\labelitemii

\begin{abstract}
This paper presents a new consensus protocol based on verifiable delay function. First, we introduce the concept of verifiable delay puzzle (VDP), which resembles the hashing puzzle used in the PoW mechanism but can only be solved sequentially. We then present a VDP implementation based on the continuous verifiable delay function. Further, we show that VDP can be combined with the Nakamoto consensus in a proof-of-stake/proof-of-delay hybrid protocol. We analyze the persistence and liveness of the protocol, and show that compared to PoW, our proposal consumes much less energy; compared to BFT leader-election based consensus algorithms, our proposal achieves better resistance to long-range attacks and DoS attacks targeting the block proposers.
\keywords{Nakamoto consensus, verifiable delay puzzle, energy efficiency, decentralization}
\end{abstract}
\section{Introduction}

Since the debut of Bitcoin \cite{nakamoto2008bitcoin} in 2009, the energy consumption of its \textit{proof-of-work} (PoW) consensus protocol has been growing at a stunning pace. It is estimated that the electricity demand of Bitcoin has already surpassed that of a small country such as Denmark. This has motivated researchers to investigate alternative consensus protocols that are more energy efficient and environmentally friendly. Among different replacement candidates, the \textit{proof-of-stake} (PoS) mechanism has attracted widespread attention \cite{bentov2014wopow,daian2016snowwhite,buchman2018tendermint,buterin2016posfaq,chen2018algorand,gilad2017algorand,micali2016algorand,buterin2017casperffg,kiayias2017ouroboros}. Instead of spending intensive computational resources on solving hashing puzzles, PoS protocols typically run a computationally inexpensive process to randomly select the next block proposer based on the stake of each node.

There are two major camps in PoS mechanism design. The first is the chain-based PoS which simulates the PoW process for leader election. While this approach is energy-efficient, some of the early works were insecure. In particular, they were vulnerable to the ``nothing-at-stake" attack since the cost for simulating PoW is minimal \cite{buterin2017casperffg}. Later protocols including Ouroboros address such concerns \cite{kiayias2017ouroboros}. However, the leader election in Ouroboros employs a secure multi-party computation based coin-tossing scheme that incurs a relatively large overhead. This scalability bottleneck is addressed in Ouroboros Praos \cite{david2017ouroborospraos} via a more efficient leader election scheme. A recent improvement, Ouroboros Genesis \cite{badertscher2018ouroborosgenesis} further allows offline parties to safely bootstrap the blockchain when they come back online. The second type of design stems from the traditional Byzantine Fault Tolerant (BFT) research \cite{lamport1982bgp,fischer1985flp,dwork1988consensuspartialsync,castro1999pbft}. Algorithms such as Tendermint, Casper FFG, Algorand, and HotStuff fall into this category \cite{buchman2018tendermint,buterin2017casperffg,chen2018algorand,gilad2017algorand,micali2016algorand,abraham2018hotstuff}. The BFT based approaches usually have proven mathematical properties. For example, under the assumption that more than 2/3 stake are held by honest participants, the safety property can typically be guaranteed even with an asynchronous network \cite{castro1999pbft}. Moreover, in the BFT based algorithms, the aforementioned attacks can be mitigated with the ``slashing condition" introduced by the Casper FFG \cite{buterin2017casperffg}. However, due to communication complexity, traditional BFT based algorithms are less scalable. Typically they only allow a relatively small number (usually less than 100) of nodes to participate in the consensus process, which limits the level of decentralization, making them less attractive for permissionless public chains. Algorand \cite{chen2018algorand,gilad2017algorand,micali2016algorand} improves the scalability with the introduction of cryptographic sortition to select a relatively small committee for each round, and runs the BFT consensus only within the committee. HotStuff \cite{abraham2018hotstuff}, on the other hand, reduces the communication complexity by signature aggregation and pipelining the voting phases. 

Although these state-of-the-art PoS protocols effectively reduces the energy consumption, they are potentially more vulnerable to the so-called ``long-range attack'' compared to the PoW alternative. In a long-range attack, the adversary forks the blockchain starting from a past block, and adds blocks to his private branch in order to overtake the longest public chain \cite{deirmentzoglou2019longrangesurvey}. PoS-based blockchains are generally more susceptible to long-range attacks, since unlike PoW, minting a block is almost costless in PoS. A particular threat to the PoS blockchain is ``posterior corruption", a special form of long-range attack \cite{deirmentzoglou2019longrangesurvey}. If an adversary can accumulate the majority stake at a certain past block height through posterior corruption, from that block he can fabricate an alternative fork indistinguishable from the longest public chain in a short amount of time, since generating a block requires minimal computational effort. Ouroboros Praos and Genesis \cite{david2017ouroborospraos,badertscher2018ouroborosgenesis} proposed to use key evolving signature schemes \cite{franklin2006keyevolving} as a countermeasure, but it assumes that without incentive, the nodes would voluntarily delete their ephemeral private keys after each use, which is not necessary practical.

A question thus arises: Can we combine the merits of PoW and PoS? To be more specific, \textit{our goal is to design a consensus protocol which has good resistance to the long-range attacks, and yet consumes low amount of energy}. In the paper we will present a new \textbf{proof-of-stake/proof-of-delay hybrid} protocol based on the concept of \textit{verifiable delay puzzle} (VDP), which can hopefully achieve these goals. In this framework, VDP resembles the hashing puzzle used in the PoW schemes but can only be solved sequentially. Our VDP construction is based on recent advancements of the \textit{verifiable delay functions} (VDF) research \cite{boneh2018vdf,pietrzak2018svdf,wesolowski2018evdf,boneh2018survey2vdfs,ephraim2020cvdf}. VDF is a type of function that requires a specified number of \textit{sequential} steps to evaluate, and produces a unique output that can be publicly and efficiently verified \cite{boneh2018vdf}. At the first glance, such difficult-to-evaluate but easy-to-verify asymmetry is very similar to the PoW hashing puzzle, making VDF a perfect drop-in replacement for the PoW hashing puzzle. However, different from the hashing puzzle, a VDF function in its original form has no intrinsic randomness. Given the same input, a VDF function always outputs the same result regardless of who computes it. To fully emulate the hashing puzzle, unpredictability needs to be somehow injected into the VDF to create the lottery effect to determine the next block proposer. Furthermore, to achieve the energy saving goal, the consensus protocol also needs to be designed in a way such that unlike in PoW, equipping with more VDF solvers does not give a node proportional advantage. Later in the paper, we will show how these properties can be achieved altogether. Incorporating VDP also makes the consensus protocol more resilient to the long-range attacks than the existing PoS designs, since proposing a block requires solving a VDP, which takes non-negligible amount of time even on a parallel computer.
\bigbreak
\noindent \textbf{Our contributions.} Aside from introducing the concept of VDP, we also make the following novel contributions:

\begin{itemize}
  \item We present a VDP construction based on the continous VDF design \cite{ephraim2020cvdf}, and incorporate the VDP into the Nakamoto consensus in a PoS blockchain. To create the lottery effect, we propose to leverage the \textit{verifiable random function} (VRF) to pseudorandomly assign different verifiable delay puzzles to the nodes competing to propose the next block. The parameters of the puzzle for a new block are fully determined by the private key of the node and the parent block. As a result, each node could take a different amount of time to solve its assigned puzzle instance, and thus randomness arises. Such a combination of VDF and VRF also makes it very efficient for other nodes to verify that a node has indeed solved its assigned puzzle instance.
  \item We propose a set of slashing rules and show that with these rules, an economically rational validator node is worse-off if it publishes blocks on multiple forks. Thus, the ``nothing-at-stake" attack can be mitigated by our protocol.
  \item We analyze the protocol backbone and derive three basic properties, namely, chain growth, chain quality, and common prefix. From these properties we prove that our protocol can achieve persistence and liveness. 
  \item The last but not the least, we analyze the protocol's energy consumption and its resistance to various forms of long-range attacks, and thus show that the proposed protocol achieves our design goals. 
  
\end{itemize}

\bigbreak
\noindent \textbf{Related works.} We note that there are other attempts to incorporate VDF in consensus protocols, including using VDF for creating random beacons, and for validator shuffling in blockchain sharding. Among these works, a protocol called Chia proposed by Cohen et al., combines proofs-of-space with VDF \cite{cohen2016post,cohen2019chia}. In their proposal, a space miner (called ``farmer'' in \cite{cohen2019chia}) can generate a proof which demonstrates that it has access to a certain amount of disk space. The miner then maps the space proof to a random integer $\tau$ in a way that a larger disk space has a higher likelihood to generate a smaller $\tau$. The miner then evaluates VDF for $\tau$ iterations. The first miner that successfully computes the VDF can broadcast its block. Similar to our proposal, this process mimics the random delay weighted by the disk space possessed. However, since $\tau$ is explicitly calculated, a miner might be reluctant to compute the VDF if its $\tau$ is too large or could perform some kinds of grinding attack with this information. This could create bias and weaken the security of the system. In contrast, in our VDP based approach, a node cannot estimate the time to solve the assigned delay puzzle until it actually solves it, which is more akin to the hashing puzzle in PoW. Furthermore, our approach is based on proof-of-stake instead of proof-of-space, which does not require the time-consuming storage setup phase. Proof-of-stake also has less moving parts and thus is simpler to analyze and implement. Finally, our analysis shows that our protocol only requires \textit{economically rational} nodes to control the super majority stake, instead of \textit{honest} nodes as required by Chia. This makes our protocol more suitable for real world applications such as cryptocurrencies. 

Ethereum 2.0 plans to employ RANDAO, a VDF based public randomness beacon for block proposer election \cite{drake2018randaomvp,drake2018vdflookahead}. While the RANDAO design can generate unbiased randomness, the public nature of the beacon indicates that everyone in the network can predict the next or even the next few block proposers. The adversary could exploit this information to launch targeted attacks. In contrast, our protocol is driven by private randomness similar to Algorand \cite{chen2018algorand}. Hence the adversary does not know which node will become the block proposer until the next block gets published. This makes our protocol more resilient to DoS attacks.
\section{Verifiable Delay Puzzle}

In this section we introduce the concept of verifiable delay puzzle, and present a construction based on the continuous verifiable delay function \cite{ephraim2020cvdf}. Informally, a Verifiable Delay Puzzle is a puzzle which requires at least $t$ sequential steps to solve even on a parallel computer, while its solution can be verified in $O(\textrm{poly}(\log(t)))$ steps. It is analogous to the hash puzzle in the proof-of-work scheme, but can only be solved sequentially.

In the following, as in Boneh et al. \cite{boneh2018vdf}, we say that an algorithm runs in parallel time $t$ with $p$ processors if it can terminate in time $t$ on a PRAM machine with $p$ parallel processors. We use the term total time (eq. sequential time) to refer to the time needed for computation on a single processor. We define VDP as follows:

\begin{definition} \label{def:vdp}
Given a security parameter $\xi$, a Verifiable Delay Puzzle (VDP) implements a function $f: X \rightarrow Y$ which maps a non-negative integer to another integer. Furthermore, given an integer $m \in (\min(Y), \max(Y))$, the VDP asks for a solution which is a pair $(t, \pi)$ with the following properties:

\begin{itemize}
  \item \textbf{Efficient verifiability}: In the solution, $t$ is a non-negative integer, and $\pi$ is a proof. There exists a deterministic algorithm $\mathcal{V}(f, t, \pi) \rightarrow \{Yes, No\}$, which verifies $f(t) < m$ with the proof $\pi$. This verification algorithm must run in total time polynomial in $\log(t)$ and $\lambda$, i.e., in $O(\textrm{poly}(\log(t), \lambda)$ steps.
  
  \item \textbf{Sequentiality}: Any parallel algorithm $\mathcal{A}$ that can produce a solution $(t, \pi)$ which satisfies $\mathcal{V}(f, t, \pi) = Yes$ would take at least $t$ sequential steps to generate the solution, using at most $\textrm{poly}(\xi)$ processors.
\end{itemize}
\end{definition}


One interesting aspect of this definition is that the number of steps $t$ needed to solve the puzzle is \textit{unknown} before the puzzle is solved. This property is crucial for preventing grinding attacks. Moreover, it is worth pointing out that with the \textit{sequentiality} requirement, an adversary with a large number of parallel processors has no advantage compared to a node with a single processor if the processors have the same speed.

\bigbreak
\noindent \textbf{VDP construction based on continuous VDF.} Pietrzak and Wesolowski separately proposed two simple VDF constructions based on repeated squaring \cite{pietrzak2018svdf,wesolowski2018evdf}. However, both constructions require that the total number of VDF evaluation cycles is known when initializing the VDF. Very recently, Ephraim et al. presented a continuous VDF design which is also based on repeated squaring, but allows efficient proofs of any intermediate $t$ during the VDF evaluation \cite{ephraim2020cvdf}. A VDP can thus be constructed based on the continuous VDF: \textit{Given input $r$ and a constant $\gamma$, find a solution $(t, d, \pi^{d})$, such that}

\begin{equation} \label{eq:repeated_squaring_vdf}
  d = H(r)^{2^t}
\end{equation}
\begin{equation} \label{eq:hash_threshold}
  K(d) \leq \gamma \cdot M
\end{equation}

Here $H(\cdot)$ is a hash function acting as a random oracle. And the constant $\gamma$ is a threshold which controls the difficulty of the puzzle. $K(\cdot)$ is a one-way hash function which maps its input to a non-negative integer, and $M$ is the maximum value of the hash function.

\begin{theorem} \label{th:vdp-construction}
The puzzle defined above is a verifiable delay puzzle.
\end{theorem}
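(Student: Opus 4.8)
The plan is to verify that the construction satisfies the two defining properties of Definition~\ref{def:vdp}, namely efficient verifiability and sequentiality, by reducing each to the corresponding guarantee of the underlying continuous VDF of Ephraim et al.\ \cite{ephraim2020cvdf}. Throughout, I would model $H$ and $K$ as random oracles, so that the starting point $H(r)$ and the threshold test $K(d) \le \gamma M$ behave like independent random values.

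For efficient verifiability, I would spell out the verifier $\mathcal{V}$ as a two-stage check on the claimed solution $(t, d, \pi^{d})$: first, use the continuous VDF's verification routine on $\pi^{d}$ to confirm that $d = H(r)^{2^t}$, and second, evaluate $K(d)$ and test whether $K(d) \le \gamma M$. The first stage runs in $O(\textrm{poly}(\log t, \lambda))$ time by the efficient-verifiability property of the cVDF intermediate proofs, and the second stage is a single hash evaluation plus an integer comparison, costing $O(\textrm{poly}(\lambda))$. Summing the two gives the required $O(\textrm{poly}(\log t, \lambda))$ bound, so this part is essentially bookkeeping.

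The substance of the proof lies in sequentiality. Here I would argue by reduction: suppose a parallel algorithm $\mathcal{A}$ using at most $\textrm{poly}(\xi)$ processors outputs a valid solution whose first component is $t$. By the soundness of the cVDF proof system, a valid $\pi^{d}$ forces $d = H(r)^{2^t}$ except with negligible probability, so $\mathcal{A}$ must in effect have produced the true value $H(r)^{2^t}$. The sequentiality (iterated-squaring hardness) of the cVDF then says that obtaining $H(r)^{2^t}$ from $H(r)$ takes at least $t$ sequential squarings regardless of parallelism, which yields the claimed $t$-step lower bound. The delicate point I would take care to address is that $t$ is chosen adaptively: the solver squares repeatedly and only learns at which step the threshold $K(d) \le \gamma M$ is first met. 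Because $K$ is a random oracle, no value of $t$ satisfying the threshold can be predicted without actually computing the corresponding $d$, so the adversary cannot skip ahead to a satisfying $t$; the continuous VDF's support for extracting a proof at any intermediate step is exactly what makes this adaptive strategy compatible with the sequentiality guarantee.

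The main obstacle I anticipate is making this adaptive-threshold argument airtight, in particular ruling out the possibility that $\mathcal{A}$ reports some large $t$ in its output while having performed substantially fewer than $t$ sequential squarings. I would close this gap by leaning on the uniqueness of $H(r)^{2^t}$ together with cVDF soundness, so that the reported $t$ is pinned to the number of squarings genuinely carried along the chain, and by invoking the random-oracle behavior of $K$ to preclude any shortcut that guesses a satisfying $d$ without traversing the squaring sequence.
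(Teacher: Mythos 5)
Your proposal is correct and follows essentially the same route as the paper: efficient verifiability is reduced to the cVDF's intermediate-proof verification plus an $O(1)$ hash-threshold check, and sequentiality is argued by contradiction with the iterated-squaring hardness of the underlying cVDF \cite{pietrzak2018svdf,ephraim2020cvdf}. Your extra care about the adaptively chosen $t$ and the impossibility of guessing a satisfying $d$ without traversing the squaring chain is sound and in fact covers a point the paper only addresses in its informal remark following the proof, so it is a welcome refinement rather than a divergence.
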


\begin{proof} 
To see why this is a verifiable delay puzzle, we define the following notations: function $f(t) = K(H(r)^{2^t})$, $m = \gamma \cdot M$, and $\pi = (d, \pi^{d})$. 

First, we note that the above construction implements a function which maps an integer $t$ to another integer $K(H(r)^{2^t})$ as required by the definition. We claim that function $f(t)$ and pair $(t, \pi)$ satisfy the two properties in Definition \ref{def:vdp}.

To prove the \textit{efficient verifiability} property, we note that to check if $(t, \pi) = (t, (d, \pi^{d}))$ is indeed a solution to the VDP, any third party just needs to verify 1) $\pi^{d}$ proves $d = H(r)^{2^t}$, and 2) $K(d) \leq m = \gamma \cdot M$. Continuous VDF allows verification of condition 1) for any $t$ value in $O(\textrm{poly}(\log(t))$ \cite{ephraim2020cvdf}, and verification of condition 2) takes $O(1)$ time. Thus, the entire verification can obviously be done in $O(\textrm{poly}(\log(t))$ time.

Next, to prove the \textit{sequentiality} property, we note that since $d$ is required to be a part of the solution, solving this VDP involves explicitly calculating the $d$ value and generating $\pi^{d}$ which proves that $d = H(r)^{2^t}$. If $t$ is the smallest integer that satisfies (\ref{eq:repeated_squaring_vdf}) and (\ref{eq:hash_threshold}), we claim that solving the puzzle takes at least $ts$ sequential steps, even for an adversary with $\textrm{poly}(\xi)$ parallel processors. Otherwise, if an adversary is able to come up with the VDP solution $(t, (d, \pi^{d}))$ in less than $t$ steps, then he already solved the repeated squaring VDF $d = H(r)^{2^t}$ in less than $t$ steps, which contradicts with the conclusions of  \cite{pietrzak2018svdf,ephraim2020cvdf}. Therefore, this construction conforms to our definition of a VDP.
\end{proof}

It is worth pointing out that the sequentiality property rules out the possibility of using proof-of-work style brute-force methods to ``guess" the solution of a VDP. In particular, with parallel computing resources, an adversarial party may be able to test multiple values of $d$ simultaneously to see if $K(d) \leq \gamma \cdot M$. However, for a given $d$, to find the value of $t$ such that $d = H(r)^{2^t}$ still takes at least $t$ sequential steps (if such a $t$ exists).

\section{System Model} \label{sec:system-model}

Before presenting the protocol design, we first provide the system model and our assumptions. Note that different VDP solvers might have different speeds. A faster VDP solver could give a node advantages over others. Considering the speed variation, we introduce the concept of speed-weighted stake below.

\begin{definition} \label{def:speed-weighted-stake}
Suppose a validator node $v_{i}$ owns $s_{i}$ fraction of the total stake, and its VDP solver can conduct $q_{i}$ sequential VDF evaluation steps per unit time. Assuming there are $n$ validators in total, the \textbf{speed-weighted stake} $sws_{i}$ of validator node $v_{i}$ can be defined as

\begin{equation} \label{eq:speed_weighted_stake}
  sws_{i} = \frac{v_i \cdot q_i}{\sum^{n}_{j}{v_j \cdot q_j}}
\end{equation}

\end{definition}


\textbf{Validator model}: The above definition refered a type of blockchain node called the \textit{validator}. Validators are the block producers of the blockchain network. In order for a node to become a validator, a certain amount of tokens needs to be staked to the node. We will describe the validator role in more details later, but here we assume that strictly more than $1 - \frac{1}{1+e} \approx 73.1\%$ of the total speed-weighted stakes are controlled by \textit{economically rational} validator nodes, where $e$ is the Euler's constant. In other words, $\sum_{i \in RV}{sws_i} > 1 - \frac{1}{1+e}$, where $RV$ is set of all rational validators. The $1 - \frac{1}{1+e}$ bound is derived from Theorem \ref{th:honest-adversarial-gap-bound} which will be presented in Section \ref{sec:protocol-analysis}. Furthermore, Theorem \ref{th:nothing-at-stake-immune} indicates that a rational validator should execute the protocol as prescribed. Thus we will use the term \textit{honest rational validators} interchangeably with economically rational validators, or simply \textit{honest validators}.
  
\textbf{Attacker model}: We assume powerful byzantine attackers. They can corrupt many targeted nodes, and have access to a large number of VDP solvers (but bounded by $\textrm{poly}(\xi)$). However, we assume that they can control no more than $\frac{1}{1+e}$ of the total speed-weighted stakes. For example, the attackers might have VDP solvers that are 3 times as fast as the average of the rational nodes. Yet if they controls less than 10\% of the total stake, their speed-weighted stake fraction is 0.25 ($= 10\% \cdot 3 / (10\% \cdot 3 + 90\% \cdot 1)$), which is still less than $\frac{1}{1+e}$. Also, we assume the attackers are computationally bounded. For instance, they cannot forge fake signatures, and cannot invert cryptographic hashes.
  
\textbf{Timing and communication model}: To focus on the core problems, we will present and analyze our proposed protocol in an ideal setting where the participants operate in a synchronous communication network with zero latency. We do not assume a direct message channel between all pairs of validators. Messages between them might need to be routed through other nodes, some of which could be byzantine nodes. 

%
\section{Nakamoto Consensus with VDP}

\subsection{Basic Design}

For simplicity, let us first consider a blockchain with a fixed set of validators who are eligible to propose new blocks. Now, let us see how to can leverage VDP to create the ``lottery effect" to pseudorandomly select a validator to propose the next block. We note that in Formula (\ref{eq:repeated_squaring_vdf}) the VDP takes an input $r$. With different values of $r$, the amount of time needed to solve the VDP varies. This is where we can inject the randomnes. In particular, we can leverage the \textit{verifiable random function} (VRF) \cite{gilad2017algorand,micali2016algorand} to pseudorandomly determine the $r$ for each validator. As formulated in Formula (\ref{eq:vrf}), a VRF typically takes a private key $sk$ of a node and a $seed$ as input, and uniquely generates a pseudorandom random value $r$ and the corresponding proof $\pi^{r}$. The proof $\pi^{r}$ enables anyone that knows the node's public key $pk$ to check that $r$ indeed corresponds to the $seed$, without having to know the private key $sk$.

\begin{equation} \label{eq:vrf}
  (r, \pi^{r}) \leftarrow VRF_{sk}(seed)
\end{equation}

Assume the block at the tail end of the chain has height $i-1$ as shown in Fig. \ref{fig:validator_compete_to_solve_VDPs}. Validator $u$ wants to propose a block for height $i$. To do this it first needs to solve the following VDP instance:

\begin{equation} \label{eq:vdp_basic_1}
  (r_{i}, \pi_{i}^{r}) \leftarrow VRF_{sk_u}(seed = d_{i-1})
\end{equation}
\begin{equation} \label{eq:vdp_basic_2}
  (t_{i}, d_{i}, \pi_{i}^{d}) \leftarrow VDP(r = r_i)
\end{equation}

\begin{figure}[htb]
\begin{center}
\includegraphics[width=0.85\textwidth]{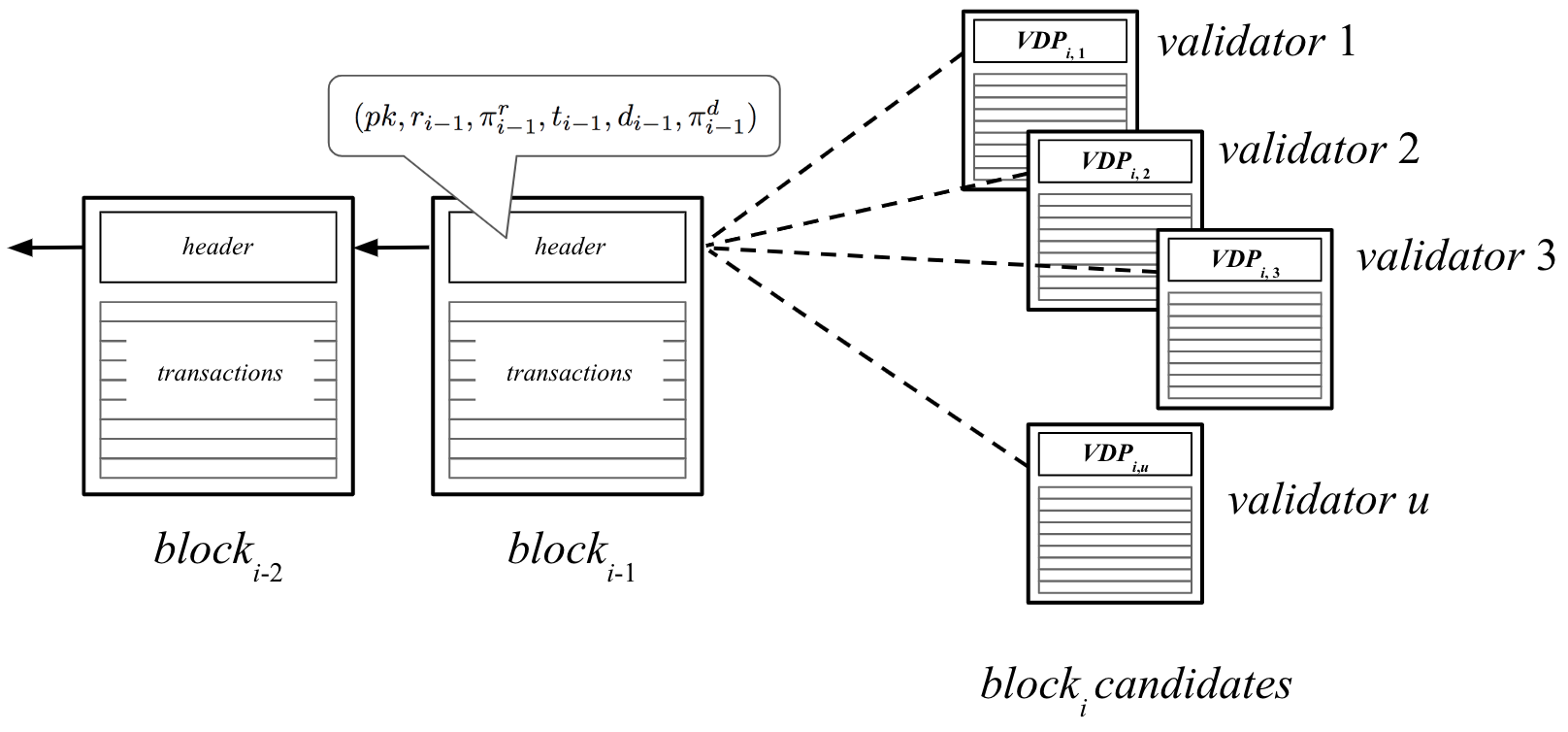}
\end{center}
\caption{Validators compete to solve VDPs. Each validator node is assigned a VDP instance fully determined by the $d$ value recorded in the parent block and the private key of the node. The VDP solution tuple is embedded in the header of the blocks.}
\label{fig:validator_compete_to_solve_VDPs}
\end{figure}

In Formula (\ref{eq:vdp_basic_1}), $sk_u$ is the private key of validator $u$. We note that the VRF takes $d_{i-1}$ as the seed. It is the $d$ value in the VDP solution for the parent block at height $i-1$. The VRF generates $r_i$, which is then used as the input for the VDP in Formula (\ref{eq:vdp_basic_2}). We note that since each validator node has its own private key $sk$, the pseudorandom value $r_i$ generated by Formula (\ref{eq:vdp_basic_1}) is unique for each validator. Thus, each validator needs to solve a different VDP instance specified by Formula (\ref{eq:vdp_basic_2}) which could require a different number of VDF evaluation cycles. This is where the randomness arises. 

Similar to Bitcoin, in our protocol, to propose the next block, a validator needs to first solve its VDP. Fig. \ref{fig:validator_compete_to_solve_VDPs} illustrates the process where the validators compete to solve VDPs. Once the assigned VDP is solved, a validator can broadcast a block. The header of the new block should contain the following tuple

\begin{equation} \label{eq:block_header_tuple}
  (pk, r_{i}, \pi_{i}^{r}, t_{i}, d_{i}, \pi_{i}^{d})
\end{equation}

Besides, the block header should also contain $\sigma_{sk}(BH_{i})$, the validator's signature of the block header (which includes the above tuple, the hash of the parent block, and the Merkle root of the transactions trie).  With the signature, if a validator tries to fork the chain by proposing multiple blocks with the same VDP solution but different sets of transactions, it can be detected by others and get punished (see the \textbf{slashing rules} in Section \ref{sec:slashing-rules}). Furthermore, this also prevents a malicious validator from proposing a block using another validator's VDP solution. If a malicious validator creates a block $B_{i}^{\prime}$ which steals the VDP solution from block $B_{i}$ but with a different set of transactions, the signer of the signature $\sigma_{sk}(B_{i}^{\prime})$ would not match with the prover of the VDF proof $\pi_{i}^{d}$, thereby rendering $B_{i}^{\prime}$ an invalid block. 

As soon as a node receives a new block, it should first verify $(r_{i}, \pi_{i}^{r})$ against the block proposer's public key $pk$ and seed $d_{i-1}$. This is to confirm whether the block proposer was indeed solving the VDP instance assigned by the VRF. If the check is passed, the node then verifies the VDF puzzle solution $(t_{i}, d_{i}, \pi_{i}^{d})$. Also the node verifies the signature $\sigma_{sk}(B_{i})$ to make sure the block is proposed by the validator with public key $pk$. If all checks are passed, and all the transactions contained in the block are valid, the validator can append this block to the block tree in its \textit{local view}.

It is worth pointing out that in the above description, the seed $d_{i-1}$ does not depend on the content of the block (i.e. the transactions included in the block). This means for a validator, if it has decided to ``mine" on top of a block, the VDP instance it needs to solve is \textbf{fully determined}. Hence, a validator cannot \textit{grind} by changing the transactions in his proposed block to gain extra advantages. 

Although the grinding attack can be addressed by making the seed in the VDP independent of the content (i.e., the transactions) of the block, there is a potential loophole an attacker can exploit. Since there is no binding between the VDP proof and the transactions included a block, an attacker might attempt to replace the content of the block after the fact. If this can be done, different chains with the same VDP proofs but different block contents could emerge. Fortunately, as mentioned above, the block needs to contain the VDP solution, the validator's signature of the block header, and the validator's public key. When a validator verifies a block, it should check the VDP solution and the block signature with the validator's public key. If a malicious validator wants to replace the blocks after the fact, since each block contains the hash of the parent block, he has to forge all the blocks starting from the divergence point. To forge these blocks, indeed he can ``reuse" the VDP results which are independent of the block content. However, as mentioned above, the input to the VDP (i.e. $r_i$) needs to be verified against the block proposer's public key. Thus, the VDP of a block is tied to the public key of the proposer. And that same public key should be able to verify the signature included in the same block, which signs the Merkle root of the transaction trie. In other words, \textit{the VDP is indirectly tied to the block content} through the public key of the block proposer. According to the chain quality lemma (Lemma \ref{lm:chain-quality} in Section \ref{sec:protocol-analysis}), it is guaranteed that a portion of the longest public chain are honest blocks. Without the private keys of these honest block proposers, the attacker cannot generate valid block signatures if he modifies the block content.

The \textbf{incentive structure} is similar to that of Bitcoin. Each block comes with block reward for the block proposer in the form of newly minted tokens. Also each transaction may carry a certain amount of fees that the block proposer can collect. To \textbf{resolve forks}, we adopt the Nakamoto longest chain rule. Bitcoin requires a number of block confirmations (typically 6 confirmations) to ensure a transaction is safe. Similarly, we have the following definition:


\begin{definition} \label{ref:confirmed-block}
A validator considers a block as \textbf{confirmed} if on its local view, the block is on a chain where there are least $\kappa_{con}$ valid blocks on the chain after it, where $\kappa_{con}$ is a security parameter.
\end{definition}

Compared to some of the BFT-based PoS protocols that \textit{publicly} elect one or a small set of leaders each round to propose blocks \cite{buchman2018tendermint,abraham2018hotstuff}, our proposed protocol also has the unique advantage that it is much more immune to targeted DoS attacks to the block proposers. This is because, similar to Bitcoin, anyone solves its assigned VDP instance is eligible to propose a new block. A DoS attacker might be able to shut down a portion of the network, but the remainder nodes can still execute the protocol to extend the blockchain.

\subsection{Permissionless Blockchain} \label{sec:permissionless-blockchain}

Now we can extend the discussion to a permissionless proof-of-stake blockchain where any node can become a validator after putting down some stake. To be more specific, to become a validator, first a node needs to generate a private/public key pair $(sk, pk)$, and then stake a certain number of tokens to its public key. To prevent sybil attack and also to avoid power concentration, we require each validator stakes \textbf{a fixed amount of tokens} $S$. For example, $S$ can be set to $1/10000$ of the total token supply so the system can accommodate up to $10,000$ validators. We also require the staked tokens to be \textbf{locked} for a certain duration (e.g. at least four weeks) before they can be unlocked.

With this protocol enhancement, any user with $S$ amount of stake can run a validator node. If a user possesses more than $S$ amount of stake, he can run multiple validators with his stake divided. We note that compared to the BFT voting based consensus mechanisms \cite{buchman2018tendermint,abraham2018hotstuff}, the chain-based consensus mechanism eliminates explicit voting and the associated communication cost. Thus, our approach is as scalable as Bitcoin in terms of the number of validators that can participate in the consensus protocol, and thus can achieve a very higher level of decentralization.

\subsection{Slashing Rules} \label{sec:slashing-rules}

To deter economically rational validators from forking the chain, we introduce the following rules inspired by Ethereum \cite{buterin2016posfaq} and SpaceMint \cite{park2018spacemint}:

\begin{itemize}
\item If a validator publishes two blocks with the same parent block, its entire stake deposit will be slashed.
\item If a validator publishes blocks on multiple forks, then on a particular fork, its deposit will be deducted by $(1 + \epsilon) \cdot R$ for each block on other forks, where $R$ is the block reward, and $0 < \epsilon \ll 1$. The deducted $(1 + \epsilon) \cdot R$ tokens are burnt completely.
\item As an incentive, the submitter of the slashing transaction gets $\epsilon \cdot R$ newly minted tokens as submitter reward for each block.
\end{itemize}

Here the term \textit{slashing transaction} refers to a special type of transaction \footnote{In an Ethereum-like smart contract platform, staking and slashing can also be implemented using a smart contract, where staking sends the required amount of ETH to the smart contract as collateral. A slashing transaction is a call to the smart contract with the necessary proofs. Once the proofs are verified by the smart contract, a proper amount of staked ETH can be seized.} that can carry the proof that a certain validator has signed and published blocks on conflicting forks. It also contains the public key and signature of the submitter for it to claim the $\epsilon \cdot R$ submitter reward. Note that for the same block, only the first reporter gets the submitter reward.

We note that most proof-of-stake protocols do not slash a validator for publishing blocks on multiple forks. Thus, at the first glance, our proposed slashing rules seem to pose more risks for the validators. However, it is worth pointing out that the proof-of-work mechanism also implicitly penalizes the miners that extend multiple forks. This is because only one single fork will win in the end, and hence all the electricity spent mining on the other forks is wasted.

Moreover, if we tune the VDP parameters properly (setting the value of $\gamma$ in Formula (\ref{eq:hash_threshold}) to a small value), the probability that two or more validators solve their assigned puzzle within one block propagation time can be made very small, similar to Bitcoin. Under such conditions, we can prove the following theorem, which states that a rational node should only propose new blocks on one chain, i.e. the longest public chain. The proof will be provided in Appendix \ref{sec:analysis-of-slashing-rules}.

\begin{theorem} \label{th:nothing-at-stake-immune}
For a rational node, proposing new block only on a fork that is most likely to win (i.e. the longest public chain in its local view) is a dominant strategy. Thus, the proposed protocol with the slashing rules is immune to the ``nothing-at-stake'' attacks.
\end{theorem}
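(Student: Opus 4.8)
The plan is to treat an economically rational validator as an agent maximizing its expected net reward, where the reward of a block (block reward plus fees, which I denote $R$) is collected only if that block ends up on the canonical chain, and where the slashing rules impose deterministic, permanently enforceable penalties whenever the validator's published blocks sit on conflicting forks. The first observation I would establish is that double-signing is always detectable and slashable: since every proposed block carries the validator's signature $\sigma_{sk}(BH_i)$, two blocks on conflicting forks constitute permanent, publicly verifiable evidence, and the submitter reward $\epsilon R$ guarantees that some observer is always incentivized to post the slashing transaction onto whichever fork becomes canonical. Hence a validator who publishes on more than one fork cannot expect to escape the penalty, regardless of the eventual outcome.

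With that established, I would compare the candidate dominant strategy $\sigma^\star$ (``propose only on the longest public chain in the local view'') against two families of alternatives. Let $p_k$ denote the probability that fork $k$ becomes canonical, so that $\sum_k p_k \le 1$. For a single-fork strategy placing the block on fork $k$, the expected payoff is simply $p_k R \ge 0$, since there is no conflicting evidence and thus no slashing. For any multi-fork strategy, on whichever fork $k^\dagger$ wins the validator collects at most $R$ for a block there but is debited $(1+\epsilon)R$ for each of its blocks sitting on the losing forks; a short case analysis (winning on one of its forks, or on none of them) shows the net is at most $-\epsilon R < 0$, and if two of the conflicting blocks share a parent the first slashing rule forfeits the entire stake $S \gg R$. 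Thus every multi-fork strategy has strictly negative expected payoff and is dominated by every single-fork strategy.

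It then remains to show that, among single-fork strategies, $\sigma^\star$ is optimal, i.e. that the longest public chain carries the maximal win probability. Here I would invoke the chain-growth, chain-quality and common-prefix properties together with the small-$\gamma$ assumption: because $\gamma$ is tuned so that the probability of two validators solving their VDP within one propagation window is negligible, natural forking is rare, the longest public chain is extended by the honest super-majority, and by the common-prefix property it is the fork most likely to remain canonical. Consequently $p_{k^\star} = \max_k p_k$ for the longest public chain $k^\star$, so $\sigma^\star$ attains the largest expected payoff $p_{k^\star} R$ among all strategies, establishing dominance and hence immunity to the nothing-at-stake attack.

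I expect the main obstacle to be this last step: rigorously pinning down the probability space over ``which fork wins'', since these probabilities are induced by the joint (possibly adversarial) behavior of all other validators and by the validator's own block, and arguing that the longest public chain genuinely maximizes the win probability rather than merely being longest at the current instant. This is exactly where the backbone properties and the small-$\gamma$ bound on simultaneous solutions must be used quantitatively; a secondary subtlety is ruling out strategic scenarios (e.g. deliberately withholding or timing blocks) in which a validator might hope that a currently shorter private fork overtakes the public one, which again must be excluded by the common-prefix guarantee under the honest-majority assumption.
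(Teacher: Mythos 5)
Your argument contains a genuine gap at its central step. The claim that \emph{every} multi-fork strategy nets at most $-\epsilon R < 0$ is false once a validator may publish several blocks per fork, which is exactly the setting of the theorem: your case analysis implicitly assumes one block per fork (``collects at most $R$ for a block there''). The paper's own worked example refutes the negativity claim: with $y_1 = 6$ blocks on $Fork_1$, $y_2 = 4$ on $Fork_2$, and $p = 0.7$, the dual-fork payoff is $(0.8 - 4.6\epsilon)R$, and in fact $0.8R$ once the validator reclaims the submitter rewards by reporting its own equivocations --- a strictly \emph{positive} expected reward. The correct statement is that multi-fork mining is \emph{weakly dominated}, not losing, and the paper proves it by an algebraic identity rather than a sign argument: writing $E_0[R_{total}] = (2p-1)(y_1 - y_2)R$, $E_1[R_{total}] = (2p-1)y_1 R$, $E_2[R_{total}] = (1-2p)y_2 R$, one checks $E_0 = E_1 + E_2$ and $E_1 \cdot E_2 \leq 0$, hence $E_0 \leq \max\{E_1, E_2\}$; this is then extended to $k$ disjoint forks on which the validator has \emph{already} published $x_i$ blocks, a scenario your framework does not address at all (your strategies only choose a fork going forward, whereas the paper must show that even a validator with sunk equivocations should continue on a single fork).

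Your payoff accounting for single-fork strategies also departs from the paper's slashing semantics. In the paper, a validator who mines $y_1$ blocks \emph{only} on $Fork_1$ is still debited $y_1(1+\epsilon)R$ on the winning chain if $Fork_2$ wins (its blocks sit on a conflicting fork from the winner's perspective), so the single-fork payoff is $(2p-1)y_1 R$ --- possibly negative --- not your $p_k R \geq 0$. This matters because it is precisely what makes the identity $E_0 = E_1 + E_2$ hold, and it mirrors the paper's remark that PoW implicitly penalizes mining on losing forks via wasted electricity. You also omit the small but load-bearing observation that the rational validator submits the slashing transactions against itself to recoup $\epsilon R$ per block, which is why the $\epsilon$ terms cancel in the maximal expected rewards. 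Finally, the step you flag as the main obstacle --- rigorously showing the longest public chain maximizes the win probability via the backbone lemmas --- is honest to flag, but note the paper does not discharge it either: its proof establishes dominance of ``mine only on the fork with the largest winning probability'' and identifies that fork with the longest public chain by assertion in the theorem statement, so you should not expect the backbone properties to be needed inside this particular proof.
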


\section{Protocol Analysis} \label{sec:protocol-analysis}

The analysis in this section will be conducted under the system model and assumptions provided in Section \ref{sec:system-model}. Most of the analysis revolves around the concept of \textbf{speed-weighted stake} introduced earlier. Regarding speed-weighted stake, we have the following lemma (proof provided in Appendix \ref{sec:proofs-backbone}).

\begin{lemma} \label{lm:block-rate-speed-weight-stake}
Denote the total speed-weighted stake of a party by $W = \sum_{i \in V}{sws_i}$, where $sws_i$ is the speed-weighted stake of validator $v_i$ as defined in Definition \ref{def:speed-weighted-stake}, and $V$ is the set of validators a party controls. Then, following the honest behavior, the aggregated block production rate $\lambda$ of these validators is proportional to $W$.
\end{lemma}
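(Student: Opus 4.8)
The plan is to show that a party's aggregate block-production rate factors into a network-wide constant times its normalized speed-weighted stake $W$, whence $\lambda \propto W$. First I would model the solving of a single VDP instance at the level of one squaring step. By the construction underlying Theorem \ref{th:vdp-construction}, an honest validator iterates $d \leftarrow d^2$ and tests the threshold condition $K(d) \leq \gamma M$ of Formula (\ref{eq:hash_threshold}) after each squaring. Treating $K(\cdot)$ composed with the random oracle $H$ as producing outputs uniformly distributed over $[0, M]$, each squaring step is an independent Bernoulli trial succeeding with probability exactly $\gamma$. A validator $v_i$ performs $q_i$ such steps per unit time, so it discovers solutions at expected rate $\gamma q_i$; for the small $\gamma$ mandated by the protocol the geometric solving time is well-approximated by an exponential, and solutions arrive as a Poisson process of rate $\gamma q_i$.

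Next I would incorporate stake via the permissionless model of Section \ref{sec:permissionless-blockchain}. Since every validator locks the same fixed amount $S$, a party holding stake fraction $s_i$ runs a number of validators proportional to $s_i$, each contributing an independent Poisson solution stream of rate $\gamma q_i$. By the superposition property of independent Poisson processes, the party's validators jointly discover solutions at a rate proportional to $\gamma \sum_{i \in V} s_i q_i$. Because the seed is refreshed with every new block and each squaring queries the random oracle afresh, the memorylessness of the process guarantees this rate is unaffected by the periodic restarts that block discovery induces, so honest behavior over many rounds realizes exactly this long-run rate.

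Finally I would pass from solution rate to block-production rate using the race structure of the protocol under the zero-latency synchronous model. The whole network discovers solutions as a Poisson process whose rate is the superposition of all validators' rates, proportional to $\sum_j s_j q_j$, and this is precisely the network block-production rate $\Lambda$ since each round ends at its first solution. By the standard thinning argument, the probability that a given block is found by the party equals its share of the total rate, namely $\sum_{i \in V} s_i q_i \big/ \sum_j s_j q_j$, which is exactly $\sum_{i \in V} sws_i = W$ by Definition \ref{def:speed-weighted-stake}. Hence the party's aggregate rate is $\lambda = W \cdot \Lambda \propto W$, as claimed. I expect the main obstacle to lie in the first step: rigorously justifying that successive squarings yield independent, identically distributed Bernoulli threshold tests with success probability $\gamma$ rests on the random-oracle idealization of $K \circ H$ and on the exponential approximation to the geometric solving time being valid for small $\gamma$; once that is granted, the aggregation and thinning steps follow from the elementary theory of Poisson processes.
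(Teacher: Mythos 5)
Your proposal is correct and follows essentially the same route as the paper's proof: both model each VDP trial as a Bernoulli success with probability $\gamma$ per Formula (\ref{eq:hash_threshold}), so that solving becomes a Poisson process whose rate scales linearly with solver speed $q_i$, and both reduce the stake dimension to validator count via the fixed per-validator deposit, yielding $\lambda \propto W$. Your version is in fact more careful than the paper's terse argument --- explicitly invoking Poisson superposition, memorylessness across block-induced restarts, and thinning to identify the party's share of blocks with $\sum_{i \in V} sws_i$ --- but these are elaborations of the same idea rather than a different approach.
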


\subsection{The Protocol Backbone Analysis}

Taking inspiration from Garay et al.'s analysis of the Bitcoin protocol \cite{garay2015bitcoinbackbone}, we introduce the \textit{backbone} of our proposed protocol, and analyze its three important properties: \textbf{chain growth}, \textbf{chain quality}, and \textbf{common prefix}. From these properties we can further derive the persistence and liveness of our protocol using Garay et al.'s framework.

Before proving these three properties, we would like to remark on one important difference between the PoW puzzle and VDP. It is well-known that for a PoW miner, splitting the hashing power on multiple forks is inferior to concentrating all the hashing power on a single fork. However there is a subtle approach to solve multiple VDPs in parallel to speed up the chain growth. If a party control multiple validators, he can potentially perform a \textbf{breadth-first-search} to find a fork that can outgrow the longest public chain. This approach is illustrated in Fig. \ref{fig:bfs_attack}. In this example, the adversary controls two validators, and has a large number of VDP solvers available. At height $i$, the adversary can create two forks, each by one valiator. Then at height $i+1$, he creates four forks, two from each fork at height $i$. The adversary can \textit{secretly} expand such forks into a tree structure without being punished by the slashing rules. He then publishes the fastest growing fork if it is longer than the longest public chain. 
\begin{figure}[htb]
\begin{center}
\includegraphics[width=0.8\textwidth]{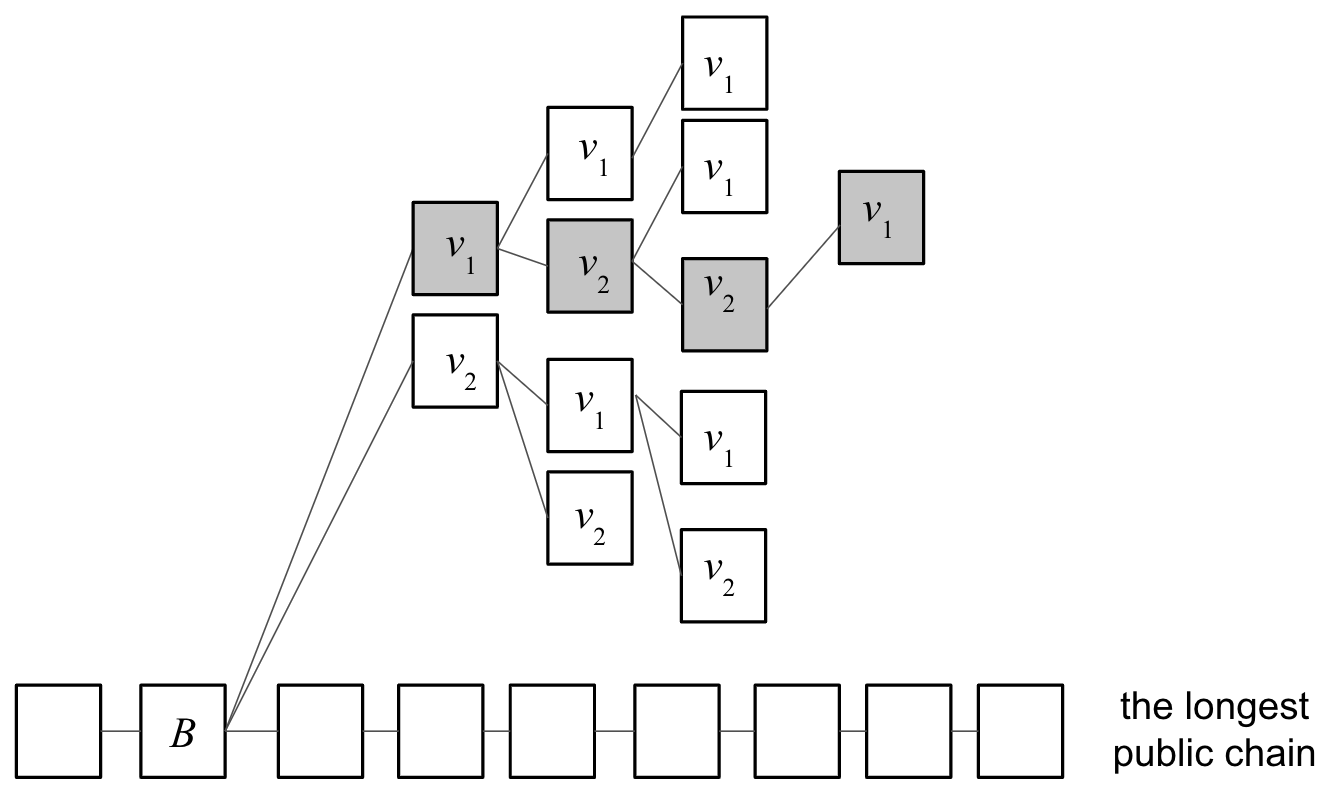}
\end{center}
\caption{Breadth-first-search to find the fastest growing fork. Assume the adversary controls two validators $v_1$ and $v_2$, and has an unlimited number of VDP solvers. He can expand the ``search tree" rooted at block $B$ and identify the fastest growing branch (marked in gray). All the VDPs at the same tree depth can be solved in parallel. For illustrative purpose, this figure only shows a case with branching factor 2. However, the analysis in this section applies to any arbitrarily large branching factor (e.g. 1,000).}
\label{fig:bfs_attack}
\end{figure}

Fortunately, even with this strategy, the growth rate of the longest branch (in terms of block height) of the tree is bounded compared to the \textbf{honest behavior} as prescribed by the protocol. We can apply the branching random walks analysis as in Bagaria et al. \cite{bagaria2019poslc} to bound the growth rate of the longest adversarial branch with the following lemma:

\begin{lemma} \label{lm:bfs-speed-up-bound} 
Assume that the adversary builds a block tree rooted at a block $B$ on the longest public chain for $T$ amount of time. Let $D_a(T)$ represents the number of blocks on the longest branch of the adversarial block tree, and $\lambda_a$ represents the block production rate \textbf{if the adversary follows the honest behavior}, then the following inequality holds true, where $e \approx 2.718$ is the Euler's constant:

\begin{equation} \label{eq:D_a_t_bound}
Pr(D_a(T) > e \lambda_a T + x) \leq e^{-x}
\end{equation}

\end{lemma}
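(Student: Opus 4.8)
The plan is to model the adversarial block tree as a \emph{branching random walk} in continuous time, following Bagaria et al.~\cite{bagaria2019poslc}, and then bound the depth reached by time $T$ through a first-moment (union-bound) argument sharpened by an exponential (Chernoff) estimate. First I would fix the probabilistic structure of the tree. Because the VDP instance a controlled validator $v_j$ must solve on top of any fixed parent block is \emph{fully determined} and is accepted only when the threshold $K(d)\le\gamma M$ is met, the number of sequential squaring steps until success is geometric, so in continuous time the time for $v_j$ to extend a given block is an independent $\mathrm{Exp}(\lambda_j)$ random variable, with $\sum_j\lambda_j=\lambda_a$ the aggregate honest rate (Lemma~\ref{lm:block-rate-speed-weight-stake}). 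Independence across the whole tree follows from the fact that each block induces a fresh seed (its own $d$ value), hence a fresh VRF output and a distinct puzzle instance, and the unlimited solvers let all validators extend every block in parallel. Thus each block spawns one child edge per controlled validator, and a root-to-node path of depth $d$ corresponds to a choice $(j_1,\dots,j_d)$ of extending validators with completion time $S_d=\sum_{k=1}^d X_k$, $X_k\sim\mathrm{Exp}(\lambda_{j_k})$.

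Next I would bound $\Pr(D_a(T)\ge d)$. The event $\{D_a(T)\ge d\}$ holds iff some depth-$d$ path completes by time $T$, so by the union bound $\Pr(D_a(T)\ge d)\le\sum_{(j_1,\dots,j_d)}\Pr(S_d\le T)$. For each path and any $\theta>0$, Markov's inequality applied to $e^{-\theta S_d}$ gives $\Pr(S_d\le T)\le e^{\theta T}\prod_{k=1}^d\frac{\lambda_{j_k}}{\lambda_{j_k}+\theta}$, and summing over all paths factorizes into $\big(\sum_j\frac{\lambda_j}{\lambda_j+\theta}\big)^d$. The decisive observation is that $\sum_j\frac{\lambda_j}{\lambda_j+\theta}\le\frac1\theta\sum_j\lambda_j=\frac{\lambda_a}{\theta}$, a quantity governed by the \emph{total} honest rate $\lambda_a$ and independent of the branching factor $m$. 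Hence $\Pr(D_a(T)\ge d)\le e^{\theta T}(\lambda_a/\theta)^d$; optimizing with $\theta=d/T$ yields $\Pr(D_a(T)\ge d)\le(e\lambda_a T/d)^d$. Setting $d=e\lambda_a T+x$ and writing $a=e\lambda_a T$, the bound becomes $(a/(a+x))^{a+x}$, which is at most $e^{-x}$ precisely because the elementary inequality $\ln(1+u)\ge u/(1+u)$ (with $u=x/(e\lambda_a T)$) rearranges to $(1+u)\ln(1+u)\ge u$. This delivers the claim.

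The main obstacle I anticipate is exactly the blow-up of the union bound for an arbitrarily large branching factor: there are $m^d$ candidate paths, so any crude per-path estimate is hopeless. The resolution, and the technical heart of the argument, is the cancellation above — each extra validator contributes a factor $\frac{\lambda_j}{\lambda_j+\theta}$ whose sum telescopes to $\lambda_a/\theta$ rather than scaling like $m$, so the exponentially many paths are offset precisely by the exponentially small chance that any single one completes quickly. This is what makes the growth rate a function of $\lambda_a$ alone (and explains the constant $e$ emerging from $\theta=d/T$), validating the figure's remark that the bound holds for any branching factor. A secondary, routine point is the passage from the integer-valued $D_a(T)$ to the real threshold $e\lambda_a T+x$: I would note that $d\mapsto(e\lambda_a T/d)^d$ is decreasing for $d>\lambda_a T$, so rounding $d$ up to an integer only strengthens the inequality, whence $\Pr(D_a(T)>e\lambda_a T+x)\le e^{-x}$ as stated.
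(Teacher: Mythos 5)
Your proof is correct and follows essentially the same route as the paper, which omits its own proof and defers to Lemma 4 of Bagaria et al.\ \cite{bagaria2019poslc}: that cited argument is precisely your first-moment/Chernoff analysis of the branching random walk, including the key factorization $\sum_j \frac{\lambda_j}{\lambda_j+\theta} \leq \frac{\lambda_a}{\theta}$ that eliminates the branching-factor dependence and the choice $\theta = d/T$ that produces the constant $e$. Your supporting observations — that the fresh seed per block (the parent's $d$ value) makes path times add sequentially and the puzzle instances independent, and that monotonicity of $d \mapsto (e\lambda_a T/d)^d$ for $d > \lambda_a T$ handles the non-integer threshold — are sound and consistent with the paper's Poisson modeling in Lemma \ref{lm:block-rate-speed-weight-stake}.
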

 
This lemma is equivalent to Lemma 4 in Bagaria et al. \cite{bagaria2019poslc}, and the proof is omitted here. It indicates that with the same amount of speed-weighted-stake, the probability that the breadth-first-search approach can identify a chain with the \textit{expected growth rate} at most $e$ times as high as that of the honest strategy, regardless of the branching factor. In fact, following the arguments in \cite{bagaria2019poslc}, we can prove that \textit{no matter what block proposal strategy the adversarial parties might employ}, the bound in Lemma \ref{lm:bfs-speed-up-bound} holds true. Next, we provide the following two lemmas to characterize the longest chain growth. The proofs for both lemmas are provided in Appendix \ref{sec:proofs-backbone}.

\begin{lemma} \label{lm:number-of-honest-blocks-bound}
Denote the block production rate of the honest parties by $\lambda_h$. Assume that within $T$ amount of time, $N_h(T)$ honest blocks are proposed, then for any $\delta \in (0, 1)$, we have:

\begin{equation} \label{eq:D_h_t_bound}
Pr(N_h(T) < (1 - \delta) \lambda_h T) \leq 2 e^{- \lambda_h T \delta^2 / 3}    
\end{equation}

\end{lemma}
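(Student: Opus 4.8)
The plan is to first establish that, under the honest strategy, block arrivals form a Poisson process, and then to apply a standard Chernoff bound to the resulting Poisson count $N_h(T)$. The statement then follows because the bound it asks for is strictly weaker than the one the Chernoff argument produces.

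First I would argue that honest block production is memoryless. Under honest behavior each validator works only on the current tip of the longest chain (Theorem \ref{th:nothing-at-stake-immune}), and the VDP assigned to validator $v_i$ is $K(H(r)^{2^t}) \le \gamma M$ with $r$ fixed by the VRF seed $d$ of the current tip. Since $K$ and $H$ are modeled as random oracles, for each sequential step $t = 1, 2, \dots$ the event $K(H(r)^{2^t}) \le \gamma M$ occurs independently with probability $\gamma$, so the number of steps until $v_i$ solves its puzzle is geometric with parameter $\gamma$; because $v_i$ performs $q_i$ steps per unit time, its solving time is, in the small-$\gamma$ continuous limit, exponential with rate proportional to $\gamma q_i$. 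The VRF outputs $r$ are pseudorandom and distinct across validators, so these solving times are mutually independent, and whenever any honest validator publishes, every honest validator restarts on a fresh, independent puzzle seeded by the new $d$. The superposition of these independent memoryless processes is therefore itself memoryless, so honest block arrivals form a Poisson process of aggregate rate $\lambda_h = \sum_{i \in H}\lambda_i$ (consistent with Lemma \ref{lm:block-rate-speed-weight-stake}), and $N_h(T)$ is Poisson with mean $\lambda_h T$.

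Next I would apply the Chernoff bounding method to the lower tail of $N_h(T)$. Writing $\mu = \lambda_h T$ and using the Poisson moment generating function $E[e^{s N_h(T)}] = e^{\mu(e^s - 1)}$, for any $s < 0$ Markov's inequality gives $Pr(N_h(T) < (1-\delta)\mu) \le e^{-s(1-\delta)\mu + \mu(e^s - 1)}$. Optimizing over $s < 0$ yields $s = \ln(1-\delta)$ and the bound $\exp(-\mu[(1-\delta)\ln(1-\delta) + \delta])$. The elementary inequality $(1-\delta)\ln(1-\delta) + \delta \ge \delta^2/2$ for $\delta \in (0,1)$ then gives $Pr(N_h(T) < (1-\delta)\lambda_h T) \le e^{-\lambda_h T \delta^2/2}$. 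Since $e^{-\lambda_h T \delta^2/2} \le 2 e^{-\lambda_h T \delta^2/3}$ for all $\delta \in (0,1)$, the claimed bound follows a fortiori, and it matches the two-sided Chernoff form customary in the backbone analyses of Garay et al. \cite{garay2015bitcoinbackbone} and Bagaria et al. \cite{bagaria2019poslc}.

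The hard part will be the justification of the Poisson model rather than the concentration step, which is routine. Specifically, one must verify that conditioning on the history of already-produced blocks does not bias the residual solving time of any honest validator—this rests on the random-oracle behavior of $H$ and $K$ together with the fresh, independent VRF seed at each new tip—and that the discretization error from replacing the geometric step-count by an exponential waiting time is negligible for the small $\gamma$ the protocol employs. Once the aggregate honest arrival process is shown to be Poisson with rate $\lambda_h$, the inequality is immediate.
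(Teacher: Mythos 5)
Your proposal is correct and follows essentially the same route as the paper's own proof, which likewise models honest block production as a Poisson process with rate $\lambda_h$ (justified via independent Bernoulli trials under the zero-latency assumption) and then invokes the Chernoff bound. Your write-up simply fills in the details the paper leaves as ``straightforward''---the random-oracle justification of the geometric/exponential solving times, the MGF optimization yielding the exponent $(1-\delta)\ln(1-\delta)+\delta \ge \delta^2/2$, and the relaxation from $e^{-\lambda_h T\delta^2/2}$ to $2e^{-\lambda_h T\delta^2/3}$---all of which are sound.
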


\begin{lemma} \label{lm:longest-public-chain-growth-rate-bound}
Denote the block production rate of the honest parties by $\lambda_h$. Assume that after $T$ amount of time, the longest chain grows by $D_h(T)$ blocks, then for any $\delta \in (0, 1)$, we have:

\begin{equation} \label{eq:D_h_t_bound}
Pr(D_h(T) < (1 - \delta) \lambda_h T) \leq 2 e^{- \lambda_h T \delta^2 / 3}    
\end{equation}

\end{lemma}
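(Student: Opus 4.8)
The plan is to derive this bound directly from Lemma~\ref{lm:number-of-honest-blocks-bound} by establishing a pathwise inequality between the longest-chain growth and the number of honest blocks produced. Let $N_h(T)$ be, as in Lemma~\ref{lm:number-of-honest-blocks-bound}, the number of honest blocks proposed during the interval of length $T$, and let $D_h(T)$ denote the increase in the length of the longest chain over the same interval. The heart of the argument is to show that, on almost every sample path,
\begin{equation}
D_h(T) \geq N_h(T),
\end{equation}
after which the claimed tail bound follows immediately from the one already proved for $N_h(T)$.

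To establish the pathwise inequality, I would exploit the zero-latency synchronous communication model of Section~\ref{sec:system-model}. Under this model every honest validator instantly learns of every published block, so at each instant all honest validators share the same local view and mine on a chain of maximal length in that common view. Let $L(t)$ denote the length of the longest chain at time $t$. Whenever an honest validator solves its VDP and publishes a block at some time $\tau$, that block is appended to a chain that is longest at time $\tau$, so $L(\tau) \geq L(\tau^{-}) + 1$; and between honest arrivals $L(\cdot)$ is non-decreasing, since adversarial blocks can only lengthen, never shorten, the longest chain. Because block production is a continuous-time point process, the honest arrival times are almost surely distinct, so under zero latency each subsequent honest block builds strictly above the previous one and no two honest blocks land at the same height. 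Summing the unit increments over the $N_h(T)$ honest arrivals then gives $D_h(T) = L(T) - L(0) \geq N_h(T)$ with probability one.

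With the pathwise bound in hand, I would finish by a simple event inclusion. Since $D_h(T) \geq N_h(T)$ almost surely, the event $\{D_h(T) < (1-\delta)\lambda_h T\}$ is contained, up to a null set, in $\{N_h(T) < (1-\delta)\lambda_h T\}$, whence
\begin{equation}
Pr\bigl(D_h(T) < (1-\delta)\lambda_h T\bigr) \leq Pr\bigl(N_h(T) < (1-\delta)\lambda_h T\bigr) \leq 2 e^{-\lambda_h T \delta^2 / 3},
\end{equation}
where the last step is exactly Lemma~\ref{lm:number-of-honest-blocks-bound}.

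The main obstacle I anticipate is making the step $L(\tau) \geq L(\tau^{-}) + 1$ fully rigorous, i.e. arguing that each honest block genuinely advances the longest chain rather than merely forking it. This is precisely where the zero-latency assumption is essential: with nonzero propagation delay two honest validators could extend the same tip before hearing of one another, so the honest-block-to-height-increment correspondence would break and one would instead need a Garay-style ``uniquely successful round'' or convergence-opportunity argument to recover a weaker lower bound. I would therefore state explicitly that the clean inequality $D_h(T) \geq N_h(T)$ relies on the idealized synchronous, zero-latency setting assumed throughout the analysis.
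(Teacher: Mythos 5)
Your proposal is correct and takes essentially the same route as the paper: both arguments rest on the zero-latency observation that no two honest blocks can occupy the same height, so the longest chain grows by at least one block per honest block, and then apply the Poisson/Chernoff tail bound on honest production. The only cosmetic difference is that you derive the pathwise domination $D_h(T) \geq N_h(T)$ directly and invoke Lemma~\ref{lm:number-of-honest-blocks-bound}, whereas the paper phrases the same coupling via a hypothetical chain in which the adversary stops producing blocks and re-derives the identical Chernoff bound for it.
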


The theorem below indicates that with high probability, the gap between the longest public chain and the longest adversarial private chain increases linearly with the block height. The proof for the theorem is provided in Appendix \ref{sec:proofs-backbone}.

\begin{theorem} \label{th:honest-adversarial-gap-bound}
Assume starting from the genesis block, the honest rational parties always control strictly more than $1 - \frac{1}{1+e} \approx 73.1\%$ fraction of the total speed-weighted-stake. Consider the longest public chain and the longest fork generated by the adversary. Assume block $B$ is the forking point. Let $N_h(T)$ represent the number honest blocks proposed during $T$ amount of time after block $B$ added to the chain. Let $D_h(T)$ represent the number of blocks on the longest public chain starting from block $B$. And let $D_a(T)$ represents the number of blocks on the longest adversarial fork after $B$. There exists a constant $\zeta > 0$, such that:

\begin{equation} \label{eq:N_h_D_a_bound}
Pr(N_h(T) - D_a(T) > \frac{1}{3} (\lambda_h - e \lambda_a) T) \geq (1 - 2 e^{-\zeta T})^2
\end{equation}

\begin{equation} \label{eq:D_h_D_a_bound}
Pr(D_h(T) - D_a(T) > \frac{1}{3} (\lambda_h - e \lambda_a) T) \geq (1 - 2 e^{-\zeta T})^2
\end{equation}

\end{theorem}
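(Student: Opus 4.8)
The plan is to bound $N_h(T)$ and $D_h(T)$ from below and $D_a(T)$ from above with the three concentration lemmas already in hand, and then to intersect the resulting ``good'' events. The first thing I would pin down is why the stake threshold $1-\tfrac{1}{1+e}$ appears: since block production rate is proportional to speed-weighted stake (Lemma \ref{lm:block-rate-speed-weight-stake}) and the honest and adversarial speed-weighted-stake fractions sum to one, the hypothesis $W_h > 1 - \tfrac{1}{1+e} = \tfrac{e}{1+e}$ is precisely $W_h > e\,W_a$, hence $\lambda_h > e\lambda_a$. I would fix $\mu := \lambda_h - e\lambda_a > 0$ once and for all; its strict positivity is exactly what makes the target gap $\tfrac{1}{3}\mu T$ nontrivial.

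Next I would calibrate two events so that their conjunction forces the inequality. Applying Lemma \ref{lm:number-of-honest-blocks-bound} with $\delta := \tfrac{\mu}{3\lambda_h}$ gives $N_h(T) \geq (1-\delta)\lambda_h T$ outside a set of probability $2e^{-\lambda_h T \delta^2/3}$, and applying Lemma \ref{lm:bfs-speed-up-bound} with $x := \tfrac{\mu}{3}T$ gives $D_a(T) \leq e\lambda_a T + \tfrac{\mu}{3}T$ outside a set of probability $e^{-\mu T/3}$. On the intersection of these two events a one-line computation yields
\[
N_h(T) - D_a(T) \;\geq\; (1-\delta)\lambda_h T - e\lambda_a T - \tfrac{\mu}{3}T \;=\; \mu T - \tfrac{\mu}{3}T - \tfrac{\mu}{3}T \;=\; \tfrac{1}{3}\mu T,
\]
which is the claimed bound. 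Setting $\zeta := \min\{\tfrac{\mu^2}{27\lambda_h},\,\tfrac{\mu}{3}\}$ makes each of the two failure probabilities at most $2e^{-\zeta T}$.

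To obtain the squared factor, I would invoke independence: the honest block-arrival process is generated by the honest validators' VDP/VRF evaluations, whereas the adversarial tree is grown by the adversary's own solvers and keys, so conditioned on the common root $B$ the two processes depend on disjoint validator sets and factorize. Hence the two good events are independent and $\Pr$ of their intersection is $\geq (1-2e^{-\zeta T})^2$, giving (\ref{eq:N_h_D_a_bound}). Inequality (\ref{eq:D_h_D_a_bound}) follows verbatim, replacing Lemma \ref{lm:number-of-honest-blocks-bound} by Lemma \ref{lm:longest-public-chain-growth-rate-bound}, whose tail bound has the identical form.

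I expect the independence justification to be the crux: the product form $(1-2e^{-\zeta T})^2$ is strictly stronger than the $1-4e^{-\zeta T}$ a union bound would give, so it genuinely relies on the honest and adversarial block-production processes being independent rather than merely on each concentrating. The remaining bookkeeping---verifying $\delta \in (0,1)$ (which holds since $\mu < \lambda_h$), recovering the strict inequality by shrinking the two constants by an arbitrarily small slack, and checking that the chosen $\zeta$ dominates both exponents---is routine.
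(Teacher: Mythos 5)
Your proof of Inequality (\ref{eq:N_h_D_a_bound}) is essentially the paper's: the same calibration (taking $\delta = \mu/(3\lambda_h)$ in Lemma \ref{lm:number-of-honest-blocks-bound} and $x = \mu T/3$ in Lemma \ref{lm:bfs-speed-up-bound}, yielding exponents $\mu^2/(27\lambda_h)$ and $\mu/3$), the same factorization of the two good events justified by the adversarial fork after $B$ containing no honest blocks, and the same product form at the end. One detail actually counts in your favor: you set $\zeta = \min\{\mu^2/(27\lambda_h),\, \mu/3\}$, which is what the final step $(1 - 2e^{-aT})(1 - e^{-bT}) \geq (1 - 2e^{-\zeta T})^2$ requires; the paper writes $\zeta = \max\{\cdot,\cdot\}$, which appears to be a typo, since with the larger exponent the last inequality would fail.

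Where you genuinely diverge is Inequality (\ref{eq:D_h_D_a_bound}), and there your shortcut has a gap. Running the argument ``verbatim'' with Lemma \ref{lm:longest-public-chain-growth-rate-bound} in place of Lemma \ref{lm:number-of-honest-blocks-bound} requires the events $\{D_h(T) > \lambda_h T - \nu T\}$ and $\{D_a(T) < e\lambda_a T + \nu T\}$ to be independent, but your disjoint-validator-sets justification does not cover $D_h$: the longest \emph{public} chain is not built by honest validators alone --- the adversary may publish blocks onto it, and how much effort he splits between publishing and growing his private tree is a strategic choice, so $D_h$ is in general correlated with $D_a$. The paper sidesteps this entirely by deriving (\ref{eq:D_h_D_a_bound}) deterministically from (\ref{eq:N_h_D_a_bound}): under zero latency no two honest blocks occupy the same height, hence $D_h(T) \geq N_h(T)$ pointwise, and the bound transfers in one line. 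Alternatively, you could repair your route by replacing $D_h$ with the hypothetical honest-only chain $D^{\prime}_h$ constructed inside the proof of Lemma \ref{lm:longest-public-chain-growth-rate-bound}, which satisfies $D^{\prime}_h \leq D_h$ and \emph{is} a function of the honest processes alone, hence independent of the adversarial tree. Either fix is a line or two, so the gap is minor but real, and your use of independence for $N_h$ versus $D_a$ in the first inequality remains sound.
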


Next, we prove that the three basic properties, i.e. chain growth, chain quality, and common prefix are guaranteed with high probability. For the proofs, we define a \textbf{round} as a fixed time interval $\Delta > 0$.

\bigbreak

\noindent \textbf{Chain Growth}. The chain growth property says that if an honest rational validator currently has a chain $C$, then after sufficiently large $s \in \mathbb{N}$ consecutive rounds, the validator will \textbf{adopt} a chain that is at least $\tau \cdot s$ blocks longer than $C$, where $\tau > 0$ is a chain growth parameter. Here ``adopting a chain $C$'' means that chain $C$ in the longest public chain in the local view of the validator.

\begin{lemma} \label{lm:chain-growth} The chain growth property holds for $\tau = \lambda_h \Delta (1 - \delta)$.
\end{lemma}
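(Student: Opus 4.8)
The plan is to reduce the chain growth statement directly to Lemma \ref{lm:longest-public-chain-growth-rate-bound}, which already bounds the growth of the longest public chain over a time window. First I would translate the round count into elapsed time: since a round is a fixed interval $\Delta$, running for $s$ consecutive rounds corresponds to a total elapsed time of $T = s\Delta$. The goal is then to show that the chain adopted by an honest rational validator grows by at least $\tau \cdot s = \lambda_h \Delta (1-\delta) \cdot s$ blocks over these $s$ rounds, with overwhelming probability.

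Next, I would invoke the timing and communication model of Section \ref{sec:system-model}: the network is synchronous with zero latency, so every honest validator has an identical local view of the published blocks, and by the longest chain rule each honest validator adopts the longest public chain. Consequently, the chain $C$ that the validator holds at the start of the window is of the same length as the longest public chain at that moment, and after $s$ rounds the validator again adopts the then-current longest public chain. Therefore the number of blocks by which the adopted chain grows over the $s$ rounds equals the growth $D_h(T)$ of the longest public chain during the interval $T = s\Delta$.

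With this identification in hand, I would apply Lemma \ref{lm:longest-public-chain-growth-rate-bound} with $T = s\Delta$. The lemma gives, for any $\delta \in (0,1)$,
$$Pr\bigl(D_h(s\Delta) < (1-\delta)\lambda_h s \Delta\bigr) \leq 2 e^{-\lambda_h s \Delta \delta^2/3}.$$
Hence, except with probability at most $2 e^{-\lambda_h s \Delta \delta^2/3}$, the adopted chain grows by at least $(1-\delta)\lambda_h \Delta \cdot s = \tau \cdot s$ blocks, with $\tau = \lambda_h \Delta (1-\delta)$. Since this failure probability decays exponentially in $s$, it becomes negligible for sufficiently large $s$, which is exactly the chain growth property with parameter $\tau$.

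I expect the main obstacle to be the reduction step rather than the probabilistic estimate: one must argue carefully that, under the zero-latency synchronous assumption, the per-validator growth of the adopted chain coincides with the growth $D_h(T)$ of the global longest public chain appearing in Lemma \ref{lm:longest-public-chain-growth-rate-bound}. In a network with latency this identification would require an additional argument relating local and global views, and would typically incur a correction reflecting the propagation delay; but in the idealized model of this paper the two views coincide, so the bound transfers verbatim and no further work is needed.
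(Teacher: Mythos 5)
Your proposal is correct and follows essentially the same route as the paper's proof: identify the adopted chain with the longest public chain via the zero-latency synchrony assumption, set $T = s\Delta$, and apply Lemma~\ref{lm:longest-public-chain-growth-rate-bound} to get the exponentially decaying failure probability $2e^{-\lambda_h s \Delta \delta^2/3}$. The only cosmetic difference is that the paper additionally cites Theorem~\ref{th:honest-adversarial-gap-bound} to remark that adversarial forks cannot outgrow the public chain, but that remark is inessential to the lower bound since Lemma~\ref{lm:longest-public-chain-growth-rate-bound} already accounts for adversarial presence, so your leaner reduction loses nothing.
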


\begin{proof} According to Theorem \ref{th:honest-adversarial-gap-bound}, if the honest rational parties controls more than $1 - \frac{1}{1+e}$ speed-weighted-stake, no adversarial fork can have a higher expected rate of growth than the longest public chain. Moreover, based on the zero-latency assumption, a block proposed by an honest rational node will be delivered to all honest nodes instantly. Thus, once an honest validator proposes a new block, the longest public chain grows by at least one block. Hence the longest public chain will grow indefinitely. Since a honest rational validator always adopts the longest chain, in must adopt a chain at least as long as the longest public chain. On the other hand, according to Lemma \ref{lm:longest-public-chain-growth-rate-bound}, $Pr(D_h(s \Delta) > \lambda_h \Delta (1 - \delta) \cdot s) \geq 1 - 2 e^{- \lambda_h s \Delta \delta^2 / 3} $. Note that the right hand side approaches 1 exponentially as $s$ approaches infinity, which means the chain growth property holds for high probability.
\end{proof}

\bigbreak

\noindent \textbf{Chain Quality}. This property says that if an honest validator has adopted chain $C$, then in any sufficiently large $l \in \mathbb{N}$ consecutive blocks of $C$, the ratio of honest blocks is at least $\mu$, where $\mu \in (0, 1]$ is the chain quality parameter.

\begin{lemma} \label{lm:chain-quality} The chain quality property holds with $\mu > 0$.
\end{lemma}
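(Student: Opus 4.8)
The plan is to establish a strictly positive lower bound $\mu$ on the fraction of honest blocks appearing in any sufficiently long window of the adopted chain. The natural strategy is to argue by contradiction: suppose some window of $l$ consecutive blocks on the adopted chain $C$ contained very few honest blocks, so that almost all of those $l$ blocks were produced by the adversary. I would then show that this forces the adversary to have grown a segment of the chain at a rate incompatible with the bound on adversarial growth established earlier. Concretely, I would first fix the window of $l$ consecutive blocks on $C$, and let $B$ denote the last honest block on $C$ strictly before the window (or the genesis block if none exists), with $B'$ the first honest block on or after the end of the window. The blocks strictly between $B$ and $B'$ are all adversarial.

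The key step is to translate the block count into a time interval and then invoke the growth bounds. Let $T$ be the amount of time that elapsed between $B$ being added and $B'$ being added. By Lemma~\ref{lm:number-of-honest-blocks-bound}, during time $T$ the honest parties produce at least roughly $(1-\delta)\lambda_h T$ honest blocks with overwhelming probability; but because honest validators always extend the longest public chain and the network has zero latency, by the chain growth lemma (Lemma~\ref{lm:chain-growth}) each such honest block advances the longest public chain. Meanwhile, the adversarial blocks filling the window form (at most) a branch of an adversarial tree rooted near $B$, so Lemma~\ref{lm:bfs-speed-up-bound} caps their number on the longest branch at about $e\lambda_a T$ with high probability. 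Since $C$ is the chain actually adopted by the honest validator, its length over this interval is at least the growth of the longest public chain, which is at least the honest contribution $(1-\delta)\lambda_h T$; yet the window assumption says nearly all of those blocks are adversarial and hence bounded by $e\lambda_a T$. Under the model assumption that honest rational parties control more than $1 - \tfrac{1}{1+e}$ of the speed-weighted stake, Theorem~\ref{th:honest-adversarial-gap-bound} gives $\lambda_h - e\lambda_a > 0$, so a purely (or nearly) adversarial long window is impossible for large $l$.

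Making this quantitative gives the value of $\mu$. If the window of $l$ blocks contained at most $(1-\mu)l$ adversarial blocks being violated means more than $(1-\mu)l$ adversarial blocks, then combining $D_a(T) > (1-\mu)l$ against $D_h(T) \ge (1-\delta)\lambda_h T$ and $l \le D_h(T) + O(1)$ yields, after substituting the high-probability bounds from Lemmas~\ref{lm:bfs-speed-up-bound}, \ref{lm:number-of-honest-blocks-bound}, and \ref{lm:longest-public-chain-growth-rate-bound}, a contradiction with the strict inequality $\lambda_h > e\lambda_a$. Choosing $\mu$ to be any constant strictly below $1 - \tfrac{e\lambda_a}{\lambda_h}$ (which is positive by the stake assumption), the failure probability decays exponentially in $l$ via the tail bounds, so the chain quality property holds with high probability for sufficiently large $l$.

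The main obstacle I anticipate is the bookkeeping that connects the combinatorial block-count statement (``at most $\mu l$ honest blocks in this window of $l$ blocks'') to the time-parametrized probabilistic bounds (which are stated in terms of $T$, not $l$). I must argue carefully that the time interval $T$ spanned by the window is itself large when $l$ is large so that the concentration bounds are effective, and I must handle the boundary honest blocks $B, B'$ and the possibility of multiple disjoint adversarial segments without double counting. A secondary subtlety is that Lemma~\ref{lm:bfs-speed-up-bound} bounds the longest branch of an adversarial tree rooted at a single block on the public chain, so I need to ensure the adversarial blocks inside the window indeed lie on a single such branch anchored at an honest block, which follows from the longest-chain structure but deserves explicit justification.
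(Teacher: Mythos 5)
Your proposal follows essentially the same route as the paper's proof: a contradiction argument in which a long mostly-adversarial window of the adopted (longest public) chain is converted into a time interval $T$, the honest production lower bound (Lemma~\ref{lm:number-of-honest-blocks-bound}) is played against the adversarial longest-branch bound (Lemma~\ref{lm:bfs-speed-up-bound}, as packaged in Theorem~\ref{th:honest-adversarial-gap-bound}), and the zero-latency fact that no two honest blocks share a height forces a public fork longer than the supposedly longest chain, with failure probability decaying exponentially via the gap $\lambda_h > e\lambda_a$. The only difference is bookkeeping: the paper sidesteps the boundary-block and multi-segment issues you flag by reducing failure of chain quality to a tail window containing \emph{no} honest block (so the window is a single adversarial branch with $D_a(T)=l$ exactly), which suffices since the lemma only asserts some $\mu>0$, whereas your quantitative refinement toward $\mu < 1 - e\lambda_a/\lambda_h$ is the same argument carried slightly further.
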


\begin{proof} Again, based on the zero-latency assumption, an honest rational validator always adopts the longest public chain. Let us analyze the chain quality of the longest public chain. Assume there is no $\mu > 0$ such that the chain quality property holds true, then for any $l \in \mathbb{N}$, there exists a point in time when the last $l$ consecutive blocks of the longest public chain contain no honest block. Let us prove this is highly unlikely. Denote $\nu = \frac{1}{3} (\lambda_h - e \lambda_a)$. According to Lemma \ref{lm:block-rate-speed-weight-stake}, $\lambda_h > e \lambda_a$, and thus $\nu > 0$ (see also the proof for Theorem \ref{th:honest-adversarial-gap-bound}). Let $B_h$ represent the last honest block on the longest public chain, i.e. the $l$ adversarial blocks appends to the chain after $B_h$. Assume these $l$ consecutive adversarial blocks takes $T$ time to generate. For convenience, we also introduce $t_b = T / l$ to represent the average block time of these $l$ adversarial blocks. Denote the number of honest blocks proposed after $B_h$ by $N_h(t_b \cdot l)$. Also note that by definition, $D_a(T) = D_a(t_b \cdot l) = l$. Then, according to Theorem \ref{th:honest-adversarial-gap-bound}, $Pr(N_h(T) < l(1 + \nu t_b)) = Pr(N_h(t_b \cdot l) < l(1 + \nu t_b)) = Pr(N_h(t_b \cdot l) < D_a(t_b \cdot l) +  l \cdot \nu t_b) = Pr(N_h(t_b \cdot l) - D_a(t_b \cdot l) < + l \cdot \nu t_b) \leq 1 - (1 - 2 e^{-\zeta t_b \cdot l})^2$. Thus, as $l$ approaches infinity, $Pr(N_h(t_b \cdot l) < l(1 + \nu t_b))$ approaches zero. In other words, with high probability, the number of honest blocks produced is at least $l(1 + \nu t_b) > l$. Due to the zero-latency assumption, no two honest blocks share the same block height. Thus, there must be a public fork after block $B_h$ with at least $l(1 + \nu t_b)$ blocks. This contradicts with the statement that the \textit{longest public} chain contains $l$ consecutive adversarial blocks after $B_h$.
\end{proof}

\bigbreak

\noindent \textbf{Common Prefix}. Given a chain $C$, let us define $len(C)$ as the length of the chain. Also, denote the subchain formed by block $i$ to $j$ by $C[i,j]$. Consider two two chains $C_1$ and $C_2$. The \textit{common prefix} property says if $C_1$ and $C_2$ are adopted by two honest rational validators at round $r$ respectively, then there exists a $k \in \mathbb{N}$ such that $C_1[0, len(C_1)-k]$ must be a prefix of $C_2$. Likewise, $C_2[0, len(C_2)-k]$ is a prefix of $C_1$.

\begin{lemma} \label{lm:common-prefix} The common prefix property holds with sufficiently large $k$.
\end{lemma}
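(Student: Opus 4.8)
The plan is to prove the common prefix property by contradiction, using the gap bound of Theorem \ref{th:honest-adversarial-gap-bound} as the main engine. Suppose the property fails for every choice of $k$. Then for an arbitrarily large $k \in \mathbb{N}$ there exist a round $r$ and two honest rational validators who adopt chains $C_1$ and $C_2$ at round $r$ such that, even after deleting the last $k$ blocks of $C_1$, the remainder is not a prefix of $C_2$. Let $B$ denote the last block common to $C_1$ and $C_2$ (the forking point), sitting at some height $h_B$, and let $T$ be the amount of time that has elapsed since $B$ was appended to the chain. The failure of common prefix for parameter $k$ forces the two suffixes of $C_1$ and $C_2$ after $B$ to be block-disjoint, with at least one of them having length exceeding $k$.

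First I would argue that the deeper of the two post-$B$ suffixes is, for the most part, an adversarial fork in the sense required by Theorem \ref{th:honest-adversarial-gap-bound}. Because honest validators always adopt the longest public chain and, under the zero-latency assumption, all honest nodes instantly share every honestly produced block, no two honest blocks can occupy the same height (as already observed in the proof of Lemma \ref{lm:chain-quality}). Consequently the single height-indexed sequence of honest blocks cannot populate both divergent branches at the same heights, so one branch beyond $B$ — say the suffix of $C_2$ — is sustained against the public chain by the adversary. For an honest validator to have adopted $C_2$ as its longest chain, this adversarial fork must be at least as long as the longest public chain grown from $B$; hence $D_a(T) \ge D_h(T)$, i.e. $D_h(T) - D_a(T) \le 0$.

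Next I would convert the divergence depth into a lower bound on $T$, so that the gap bound can be invoked with a vanishing failure probability. Since the adversarial suffix contains more than $k$ blocks, Lemma \ref{lm:bfs-speed-up-bound} gives $Pr(D_a(T) > e\lambda_a T + x) \le e^{-x}$, so with overwhelming probability the elapsed time satisfies $T \ge (k - x)/(e\lambda_a)$; in particular $T$ grows linearly in $k$. Plugging this into Theorem \ref{th:honest-adversarial-gap-bound} and recalling that the honest rational parties control more than $1 - \tfrac{1}{1+e}$ of the speed-weighted stake (so that $\lambda_h > e\lambda_a$ and the constant $\tfrac{1}{3}(\lambda_h - e\lambda_a)$ is strictly positive), the event $D_h(T) - D_a(T) \le 0$ occurs with probability at most $1 - (1 - 2e^{-\zeta T})^2$, which tends to $0$ as $T$, and hence $k$, grows. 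Choosing $k$ large enough therefore makes a common-prefix violation occur only with negligible probability, establishing the lemma with high probability.

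The main obstacle I anticipate is the second step: making precise the claim that a deep divergence between two honestly adopted chains necessarily certifies an adversarial fork whose length nearly matches the public chain, so that the abstract quantities $D_a(T)$ and $D_h(T)$ in Theorem \ref{th:honest-adversarial-gap-bound} genuinely apply. This requires carefully tracking which blocks on the divergent suffix are honest and arguing, via the height-uniqueness of honest blocks together with the chain quality bound of Lemma \ref{lm:chain-quality}, that the honest contribution to the losing branch is limited, so the branch really is governed by the adversary's extension strategy rather than by honest block production. Handling the boundary effects of the fixed $k$ blocks removed from the tip, and the additive slack $x$ between the two probabilistic bounds, I expect to be routine by comparison.
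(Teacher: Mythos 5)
Your overall skeleton --- proof by contradiction, height-uniqueness of honest blocks under zero latency, and Theorem \ref{th:honest-adversarial-gap-bound} as the engine --- matches the paper's proof, but the step you yourself flag as the main obstacle is a genuine gap, and the repair you sketch does not close it. You reduce a common-prefix violation to the claim that the losing suffix is an adversarial fork, so that $D_a(T) \geq D_h(T)$ and the theorem's $D_h - D_a$ bound applies. This is unjustified: even with zero latency, the adversary can keep the two branches at equal length by releasing withheld blocks selectively, so honest validators may tie-break differently and honest blocks can appear on \emph{both} divergent suffixes. In that case neither suffix is the adversarial quantity $D_a(T)$ of Lemma \ref{lm:bfs-speed-up-bound} (which counts only adversary-produced blocks in the private tree), and both your inequality $D_a(T) \geq D_h(T)$ and your time bound $T \geq (k-x)/(e\lambda_a)$ fail to follow. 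Your proposed patch --- invoking Lemma \ref{lm:chain-quality} to limit the honest contribution to the losing branch --- points in the wrong direction: chain quality gives a \emph{lower} bound ($\mu > 0$) on the honest fraction of an adopted chain; it provides no upper bound on honest blocks in the losing branch, so it cannot certify that the branch is ``governed by the adversary.''

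The paper closes this hole with a counting/matching argument rather than attributing an entire branch to the adversary. Since honest blocks occupy distinct heights, at each of the $k$ disjoint heights at most one of $C_1, C_2$ carries an honest block, so the other chain must carry an adversarial block at that same height; hence if $y$ honest blocks fall in the window, the adversary must have produced $z \geq y$ blocks to keep the chains disjoint over $k$ heights. The $N_h(T) - D_a(T)$ form of Theorem \ref{th:honest-adversarial-gap-bound}, i.e.\ Inequality (\ref{eq:N_h_D_a_bound}), which compares the \emph{count of honest blocks produced} (not the public chain length) against adversarial production, then makes $P(z \geq y)$ decay exponentially in $k$, yielding the contradiction. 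If you replace your second step with this matching argument and invoke the $N_h$ versus $D_a$ inequality instead of $D_h$ versus $D_a$, your proof goes through; as written, it does not.
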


\begin{proof}
First, we note that under the zero network latency assumption, two honest blocks will not be proposed for the same height. This is because once an honest block is proposed, all the other honest rational validators will receive the block immediately. Then, the honest rational validators will stop solving the VDP for the current height and move to the next height. Thus, all the honest blocks must be generated for different heights.

Now we can prove the common prefix property by contradiction. If the common fix property does not hold, for the last $k$ block heights, $C_1$ and $C_2$ must be completely disjoint. Assume out of these $k$ block heights, there are $y$ honest blocks were generated. As discussed above, there $y$ honest blocks are generated for different heights. To keep $C_1$ and $C_2$ disjoint for $k$ block heights, the number of adversarial blocks $z$ must be at least as large as $y$, such that the adversarial blocks can completely override all the honest blocks on either $C_1$ or $C_2$. However, this contradicts with Theorem \ref{th:honest-adversarial-gap-bound}, which implies  $P(z \geq y)$ decreases exponentially with $k$. Thus, with high probability, the common prefix property holds for a sufficiently large $k$.
\end{proof}

\bigbreak

\noindent \textbf{Persistence and Liveness.} Persistence means that once a block is confirmed for all honest rational validators, the probability that it can be reverted is negligible as the chain grows. Liveness means that a valid transaction submitted to an honest validator will eventually be included in a block confirmed on all honest rational validators. The main result of Garay et al. is that the above three properties, i.e. chain growth, chain quality, and common prefix implies persistence and liveness \cite{garay2015bitcoinbackbone,bagaria2019poslc}. Thus, we have the following theorem. The proof is omitted here since it essentially follows the arguments of Garay et al. \cite{garay2015bitcoinbackbone}.

\begin{theorem} \label{th:persistence-and-liveness}
Our proposed protocol achieves both persistence and liveness.
\end{theorem}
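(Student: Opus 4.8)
The plan is to reduce both properties to the abstract backbone theorem of Garay et al.\ \cite{garay2015bitcoinbackbone}, which establishes that any protocol satisfying chain growth, chain quality, and common prefix simultaneously enjoys persistence and liveness. Since Lemmas \ref{lm:chain-growth}, \ref{lm:chain-quality}, and \ref{lm:common-prefix} have already verified these three properties for our protocol (with parameters $\tau = \lambda_h \Delta (1 - \delta)$, some $\mu > 0$, and a sufficiently large common-prefix parameter $k$), the bulk of the remaining work is to instantiate their framework and to match our security parameter $\kappa_{con}$ to the common-prefix parameter $k$.

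For persistence, I would argue directly from the common prefix property. Recall from Definition \ref{ref:confirmed-block} that a block is confirmed once at least $\kappa_{con}$ valid blocks extend it in an honest validator's local view. By Lemma \ref{lm:common-prefix}, any two chains $C_1$ and $C_2$ adopted by honest rational validators agree except possibly on their last $k$ blocks. Choosing $\kappa_{con} \geq k$, a confirmed block lies strictly inside the common prefix of every honest view, so it occupies the same position on every honest chain both at the current round and at every future round (dropping the last $k$ blocks of any longer future chain still retains it). Hence the probability that a confirmed block is ever reverted is bounded by the common-prefix failure probability, which Lemma \ref{lm:common-prefix} shows decays exponentially in $k$, giving persistence.

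For liveness, I would combine chain growth and chain quality. Let a valid transaction be handed to an honest validator at some round. By chain growth (Lemma \ref{lm:chain-growth}), within any window of $s$ rounds the adopted chain lengthens by at least $\tau \cdot s$ blocks, so the chain cannot stall. By chain quality (Lemma \ref{lm:chain-quality}), in any sufficiently long run of $l$ consecutive blocks at least a $\mu$-fraction are honest; since an honest proposer always includes every pending valid transaction, the transaction must be incorporated into an honest block within a bounded number of rounds. Once it is included, chain growth guarantees that at least $\kappa_{con}$ further blocks are appended after it, so the containing block becomes confirmed, and the persistence argument above then guarantees it stays confirmed. Taking the window large enough drives the failure probability to a negligible value, yielding liveness.

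The main obstacle is the bookkeeping of parameters and failure probabilities rather than any new structural idea: one must simultaneously fix $\delta$, $\mu$, $k$, and $\kappa_{con}$ so that all three backbone guarantees hold with overwhelming probability over the same execution, and verify that the VDP-specific behavior --- in particular the breadth-first-search speed-up bounded by Lemma \ref{lm:bfs-speed-up-bound} and absorbed into Theorem \ref{th:honest-adversarial-gap-bound} --- does not disturb the reduction. Because those VDP considerations are already folded into the proofs of the three backbone lemmas, the remaining step is purely the generic Garay et al.\ argument, and no part of the VDP construction needs to be re-examined here.
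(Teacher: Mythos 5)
Your proposal is correct and follows exactly the route the paper takes: the paper's proof is deliberately omitted because it ``essentially follows the arguments of Garay et al.~\cite{garay2015bitcoinbackbone}'', i.e.\ the same reduction of persistence to the common prefix property (with $\kappa_{con}$ matched to $k$) and of liveness to chain growth plus chain quality that you spell out. You have simply filled in the details of the generic backbone argument that the paper leaves to the cited reference, so there is nothing to criticize.
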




\subsection{Energy Efficiency} 




In this section, we analyze the energy efficiency of the proposed protocol, and discuss why our protocol can reach consensus with much less energy consumption than proof-of-work based protocols. 

\begin{lemma} \label{lm:attacker-winning-probability} If the adversarial parties control less than $\frac{1}{1+e}$ fraction of the total speed-weighted-stake, then there exists a constant $\zeta > 0$, such that:

\begin{equation}
Pr(D_h(T) > D_a(T)) \geq (1 - 2 e^{-\zeta T})^2
\end{equation}

\end{lemma}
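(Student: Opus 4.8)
The plan is to recognize this lemma as an almost immediate corollary of Theorem \ref{th:honest-adversarial-gap-bound}, specifically its second conclusion (\ref{eq:D_h_D_a_bound}). First I would observe that the hypothesis here---that the adversary controls less than $\frac{1}{1+e}$ of the total speed-weighted stake---is exactly complementary to the hypothesis of Theorem \ref{th:honest-adversarial-gap-bound}, namely that the honest rational parties control strictly more than $1 - \frac{1}{1+e}$. Since the two stake fractions sum to one, the two hypotheses are equivalent, so the theorem applies verbatim, and in particular the same constant $\zeta > 0$ is available.

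Next I would establish that the growth-rate gap is strictly positive, since this is what turns the theorem's bound into the desired winning-probability bound. By Lemma \ref{lm:block-rate-speed-weight-stake} the block production rates $\lambda_h$ and $\lambda_a$ are proportional to the honest and adversarial speed-weighted stakes respectively. Writing $p_h$ and $p_a$ for these two fractions, the hypothesis gives $p_h > \frac{e}{1+e}$ and $p_a < \frac{1}{1+e}$, whence $\lambda_h / \lambda_a = p_h / p_a > e$. Therefore $\frac{1}{3}(\lambda_h - e \lambda_a) > 0$, so the threshold $\frac{1}{3}(\lambda_h - e \lambda_a) T$ appearing in the theorem is strictly positive for every $T > 0$.

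With positivity in hand, I would invoke the elementary containment of events: whenever $D_h(T) - D_a(T) > \frac{1}{3}(\lambda_h - e \lambda_a) T$ and the right-hand side is positive, it follows that $D_h(T) > D_a(T)$. Hence the event in (\ref{eq:D_h_D_a_bound}) is contained in the event $\{D_h(T) > D_a(T)\}$, and monotonicity of probability yields
\begin{equation}
Pr(D_h(T) > D_a(T)) \geq Pr\left(D_h(T) - D_a(T) > \tfrac{1}{3}(\lambda_h - e \lambda_a) T\right) \geq (1 - 2 e^{-\zeta T})^2 ,
\end{equation}
with the same constant $\zeta > 0$ supplied by Theorem \ref{th:honest-adversarial-gap-bound}. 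This completes the argument.

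The main obstacle, such as it is, lies entirely in the positivity check above rather than in any fresh probabilistic estimate: the whole content of the lemma is that the $\frac{1}{1+e}$ stake bound forces $\lambda_h > e \lambda_a$, which is precisely what makes the Euler-constant factor $e$ inherited from Lemma \ref{lm:bfs-speed-up-bound} survive as a strictly positive margin in the threshold of Theorem \ref{th:honest-adversarial-gap-bound}. No branching-random-walk or concentration argument needs to be redone here; everything heavy was already absorbed into that theorem, so the proof reduces to a complementary-hypothesis remark, one proportionality computation, and a set-containment step.
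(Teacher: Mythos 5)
Your proposal is correct and follows essentially the same route as the paper: both invoke Theorem \ref{th:honest-adversarial-gap-bound} directly, use Lemma \ref{lm:block-rate-speed-weight-stake} to conclude $\lambda_h > e\lambda_a$ so that the threshold $\frac{1}{3}(\lambda_h - e\lambda_a)T$ is strictly positive, and then deduce $Pr(D_h(T) > D_a(T)) \geq (1 - 2e^{-\zeta T})^2$ by event containment. If anything, your write-up is slightly more careful than the paper's, which leaves the complementary-hypothesis remark and the set-containment step implicit (and contains a typo, writing $\nu$ where $\nu T$ is meant).
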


\begin{proof}
Denote $\nu = \frac{1}{3} (\lambda_h - e \lambda_a)$. Theorem \ref{th:honest-adversarial-gap-bound} states that $Pr(D_h(T) - D_a(T) > \nu) \geq (1 - 2 e^{-\zeta T})^2$. On the other hand, according to Lemma \ref{lm:block-rate-speed-weight-stake}, following the honest strategy, the block production rate of party is proportional to the total speed-weighted stake that party owns. Thus, we have $\lambda_h > e \lambda_a$ since the total speed-weighted stake of the honest parties is at least $e$ times of that of the adversary parties. Thus, $\nu = \frac{1}{3} (\lambda_h - e \lambda_a) > 0$.  As a result, we have the above inequality which characterizes the probability that the longest public chain is longer than the adversarial private chain. Apparently, this probability approaches 1 quickly as $T$ increases.
\end{proof}

\begin{theorem} \label{th:attacker-expected-return} Assume the adversarial parties fork the chain for $T$ amount of time, and denote their total block reward by $R_a(T)$. If the adversarial parties own less than $\frac{1}{1+e}$ fraction of the total speed-weighted-stake, their expected total rewards approaches zero asymptotically, i.e., $\lim_{T \rightarrow \infty}{\mathbb{E}[R_a(T)]} = 0$.
\end{theorem}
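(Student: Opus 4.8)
The plan is to argue that the adversary collects block rewards only when its private fork actually overtakes the longest public chain, that this winning event becomes exponentially unlikely in $T$, and that the reward achievable upon winning grows only polynomially in $T$, so the product of the two vanishes. First I would make precise that under the longest-chain rule, blocks on a losing fork are orphaned and earn nothing; hence the total reward is bounded by $R_a(T) \leq R \cdot D_a(T) \cdot \mathbf{1}[\text{adversary wins}]$, where $R$ is the per-block reward, $D_a(T)$ is the length of the longest adversarial branch, and the winning event is contained in $\{D_a(T) \geq D_h(T)\}$. The sibling branches secretly grown in the breadth-first-search tree never enter the public chain, so they contribute nothing to $R_a(T)$.

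Next I would bound the winning probability. Since the adversary controls less than $\frac{1}{1+e}$ of the speed-weighted stake, Lemma \ref{lm:attacker-winning-probability} supplies a constant $\zeta > 0$ with $Pr(D_h(T) > D_a(T)) \geq (1 - 2e^{-\zeta T})^2$, so the complementary winning probability obeys
\begin{equation}
Pr(\text{adversary wins}) \leq 1 - (1 - 2 e^{-\zeta T})^2 \leq 4 e^{-\zeta T}.
\end{equation}
Taking expectations reduces the entire claim to showing $\mathbb{E}[R_a(T)] \leq R \cdot \mathbb{E}[\,D_a(T)\,\mathbf{1}[\text{wins}]\,] \to 0$.

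The key estimate combines this exponentially small winning probability with the light upper tail of $D_a(T)$. By Cauchy--Schwarz,
\begin{equation}
\mathbb{E}[\,D_a(T)\,\mathbf{1}[\text{wins}]\,] \leq \sqrt{\mathbb{E}[D_a(T)^2]} \cdot \sqrt{Pr(\text{adversary wins})}.
\end{equation}
Lemma \ref{lm:bfs-speed-up-bound} gives $Pr(D_a(T) > e\lambda_a T + x) \leq e^{-x}$, so $D_a(T)$ is dominated by $e\lambda_a T$ plus an $\mathrm{Exp}(1)$ tail; integrating this tail yields $\mathbb{E}[D_a(T)^2] = O(T^2)$, merely polynomial in $T$. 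Substituting the two bounds gives $\mathbb{E}[R_a(T)] \leq R \cdot O(T) \cdot e^{-\zeta T / 2}$, and since exponential decay dominates polynomial growth, $\lim_{T \to \infty} \mathbb{E}[R_a(T)] = 0$. Equivalently, one can avoid the second moment by splitting on the typical event $\{D_a(T) \leq 2 e\lambda_a T\}$, bounding $D_a$ by $2 e\lambda_a T$ there and controlling the complement with the tail bound.

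The main obstacle is precisely this third step: $D_a(T)$ is unbounded and grows linearly in $T$, so one cannot simply bound $R_a(T)$ by a constant times the winning probability. The resolution is that the reward grows only linearly while the winning probability decays exponentially, so their product still tends to zero; making this rigorous requires controlling the joint behaviour of $D_a(T)$ and the winning indicator, which the Cauchy--Schwarz step (or the tail-splitting alternative) handles cleanly. A secondary point worth stating explicitly is the modelling assumption that orphaned blocks earn nothing, since it is this fact that forces the reward to be multiplied by the winning indicator in the first place.
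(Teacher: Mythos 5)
Your proposal is correct, and it reaches the paper's conclusion by a genuinely different technical route. The paper's proof starts from the factorized series $\mathbb{E}[R_a(T)] = R\sum_{k\geq 1} Pr(D_a(T)=k)\,Pr(D_h(T)<k)\cdot k$ --- explicitly invoking the modelling assumption that the public and private chains grow \emph{independently} --- and then splits the sum at $k \approx \lambda_h T$: for small $k$ it bounds each joint probability by $1-(1-2e^{-\zeta T})^2$ via Lemma \ref{lm:attacker-winning-probability}, and for large $k$ it uses the tail bound of Lemma \ref{lm:bfs-speed-up-bound} with $Pr(D_h(T)<k)\leq 1$, summing a geometric series. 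You instead write $R_a(T) \leq R\, D_a(T)\,\mathbf{1}[\text{wins}]$ and apply Cauchy--Schwarz, pairing the exponentially small winning probability (again from Lemma \ref{lm:attacker-winning-probability}, with the clean estimate $1-(1-2e^{-\zeta T})^2 \leq 4e^{-\zeta T}$) against the polynomial second moment $\mathbb{E}[D_a(T)^2] = O(T^2)$ extracted from the same exponential tail of Lemma \ref{lm:bfs-speed-up-bound}. Both arguments rest on the same two lemmas and the same high-level idea (reward grows polynomially, winning probability decays exponentially), but your decomposition is multiplicative rather than a split over the value of $D_a(T)$, which buys two things: it is shorter, and at the theorem level it does not require independence of $D_a(T)$ and $D_h(T)$, since Cauchy--Schwarz controls the joint behaviour directly (independence is still used inside the paper's proof of Theorem \ref{th:honest-adversarial-gap-bound}, on which Lemma \ref{lm:attacker-winning-probability} depends, so neither argument escapes it entirely). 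What the paper's series split buys in exchange is slightly more explicit constants in each regime. Your explicit flagging of the modelling assumption that orphaned blocks earn nothing is also a point the paper leaves implicit in its reward formula, and your tail-splitting alternative on $\{D_a(T) \leq 2e\lambda_a T\}$ is equally valid and closer in spirit to the paper's two-regime structure.
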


\begin{proof} To prove the claim, we first note that for the adversarial parties to win any reward, their private chain needs to be longer than the longest public chain. Moreover, the public chain and adversarial private chain grow independently. Thus, we have:

\begin{equation*}
\mathbb{E}[R_a(T)] = R\sum_{k=1}^{\infty}{Pr(D_a(T) = k)Pr(D_h(T) < k) \cdot k}
\end{equation*}

\noindent where $R$ is the reward for one block. To calculate the above summation, we separate it into two parts, the first with $k$ summing from 1 to $\floor{\lambda_h T}$, and the second part with $k$ summing from $\ceil{\lambda_h T}$ to infinity. For the first part, since event $\{D_a(T) = k, D_h(T) < k\}$ is a subset of $\{D_h(T) \leq D_a(T)\}$, we have $Pr(D_a(T) = k)Pr(D_h(T) < k) \leq 1 - Pr(D_h(T) > D_a(T)) \leq 1 - (1 - 2 e^{-\zeta T})^2$, where the last step is due to Lemma \ref{lm:attacker-winning-probability}. Thus, we have the following inequality:

\begin{align*}
&\sum_{k=1}^{\floor{\lambda_h T}}{Pr(D_a(T) = k)Pr(D_h(T) < k) \cdot k} \\
&\leq \sum_{k=1}^{\floor{\lambda_h T}}{(1 - (1 - 2 e^{-\zeta T})^2) \cdot k} = 2(e^{-\zeta T} - e^{-2\zeta T})(1 + \floor{\lambda_h T})\floor{\lambda_h T} \\
\end{align*}

It is straightforward to prove that the above value approaches zero as $T$ approaches infinity. Next, we look at the second part where $k \geq \ceil{\lambda_h T}$. For this, we note that obviously $Pr(D_h(T) < k) \leq 1$. Moreover, $Pr(D_a(T) = k) \leq Pr(D_a(T) > k - 1) < e^{-(k - 1 - e \lambda_a T)}$, where the last step is due to Lemma \ref{lm:bfs-speed-up-bound}. Thus, we can bound the second part of the summation by:

\begin{align*}
&\sum_{k=\ceil{\lambda_h T}}^{\infty}{Pr(D_a(T) = k)Pr(D_h(T) < k) \cdot k} \\
&\leq \sum_{k=\ceil{\lambda_h T}}^{\infty}{e^{-(k - 1 - e \lambda_a T)} \cdot 1 \cdot k} = \sum_{z=0}^{\infty}{e^{-z - \ceil{\lambda_h T} + 1 + e \lambda_a T} \cdot (z + \ceil{\lambda_h T})} \\
&\leq \sum_{z=0}^{\infty}{e^{-z - (\lambda_h - e \lambda_a) T + 1} \cdot (z + \lambda_h T + 1)} = e^{- (\lambda_h - e \lambda_a) T + 1} \sum_{z=0}^{\infty}{e^{-z} \cdot (z + \lambda_h T + 1)}
\end{align*}

Note that in the above derivation, for ease of calculation, we substitute $k$ with $z + \ceil{\lambda_h T}$, which does not change the result. In the proof of Lemma \ref{lm:attacker-winning-probability} we has mentioned that $(\lambda_h - e \lambda_a) T > 0$, thus, it is straightforward to prove that the above value also approaches zero as $T$ approaches infinity. Summing the two parts together, we can conclude that $\lim_{T \rightarrow \infty}{\mathbb{E}[R_a(T)]} = 0$.
\end{proof}

The above theorem indicates that without owning more than $\frac{1}{1+e}$ fraction of the total speed-weighted-stake, investing in extra VDP solvers to grow private block trees brings little return. Hence, purchasing VDP solvers to secretly produce a private chain is \textbf{worst off} than following the protocol as prescribed. Thus, unlike proof-of-work based protocols such as Bitcoin, our protocol can disincentive users from trapping into the hashing power arms race, and is therefore much more \textbf{energy efficient} while attains a high degree of decentralization.

\subsection{Long-Range Attack Resistance} 


One category of common attacks to PoS blockchains is the ``long-range attack", where the adversary forks the blockchain starting from a distant past (e.g. the genesis block), with the hope that the private branch will take over the longest public chain at some point. Due to the sequential dependency of the VDPs (i.e. the VDP for block height $l+1$ depends on the VDP solution contained in the block for height $l$), our protocol significantly increases the difficulty of long-range attack, since minting a chain requires solving VDPs sequentially. In particular, we have the following theorem (proof provided in Appendix \ref{sec:proofs-long-range-attack}):

\begin{theorem} \label{th:basic-long-range-resistance}
If an adversary is never able to acquire more than $\frac{1}{1+e}$ fraction of the total speed-weighted-stake at a certain point in history, then the probability that he can successfully launch a long-range attack is negligible. 
\end{theorem}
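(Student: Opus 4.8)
The plan is to reduce any long-range attack to the chain-gap bound already established in Theorem~\ref{th:honest-adversarial-gap-bound}, using the VDP sequentiality to rule out exactly the pre-computation that makes long-range attacks dangerous for keyless PoS. First I would formalize the attack: the adversary selects a fork point $B$ at some past height $h_B$, added to the public chain at real time $t_B$, and attempts to grow a private branch from $B$ that eventually exceeds the length of the longest public chain. The central observation is that because the VDP input $r_{h_B+1}$ is derived (through the VRF) from $d_{h_B}$, the VDP solution recorded in $B$, and each subsequent block's puzzle depends on its parent's solution, no block of the private branch can be computed before $B$ itself exists. Consequently the adversary's private branch can only accumulate blocks during the real time elapsed after $t_B$, and at a rate governed by sequential VDP solving rather than by instantaneous key possession.

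Next I would let $T$ denote the elapsed time from $t_B$ to the moment the adversary attempts to publish, and invoke the speed-weighted-stake hypothesis. Since the blocks forking from $B$ must be signed by validators eligible at height $h_B$, the adversary's effective block-production rate $\lambda_a$ is determined by the stake it controls among those validators; the assumption that the adversary never holds more than $\frac{1}{1+e}$ of the total speed-weighted stake at any historical point guarantees $\lambda_a < \frac{1}{1+e}$ of the total rate at $h_B$, whence $\lambda_h > e\lambda_a$. Even granting the adversary a breadth-first search over all its keys and an unbounded supply of VDP solvers, Lemma~\ref{lm:bfs-speed-up-bound} caps the longest adversarial branch at $D_a(T) \le e\lambda_a T + x$ with probability at least $1 - e^{-x}$, while Lemma~\ref{lm:longest-public-chain-growth-rate-bound} keeps the public chain growing at essentially $\lambda_h$.

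Combining these, Theorem~\ref{th:honest-adversarial-gap-bound} yields $Pr\big(D_h(T) - D_a(T) > \tfrac{1}{3}(\lambda_h - e\lambda_a)T\big) \ge (1 - 2e^{-\zeta T})^2$, so the probability that the private branch ever matches or exceeds the public chain is at most $1 - (1 - 2e^{-\zeta T})^2$, which decays exponentially in $T$. Because a long-range fork has a large $T$ by definition, this success probability is negligible; and since the adversary may in principle fork from any past height, I would finish with a union bound over candidate fork points, noting that fork points deeper in the past carry a larger $T$ and hence a smaller failure term, so the sum remains negligible.

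The main obstacle I expect is the posterior-corruption subtlety: I must argue carefully that the relevant $\lambda_a$ is pinned to the stake distribution at the fork height $h_B$ --- not to the adversary's present stake --- and that the hypothesis bounding the adversary's share at \emph{every} historical point is precisely what forces $\lambda_h > e\lambda_a$ on the forked branch. The delicate issue is that the validator set and the total stake both evolve between $h_B$ and the present, so I need to confirm that the speed-weighted-stake fraction entering Lemma~\ref{lm:block-rate-speed-weight-stake} and Theorem~\ref{th:honest-adversarial-gap-bound} is evaluated consistently at $h_B$, and that the VDP sequentiality indeed forbids any head start that would let the adversarial branch evade the $e\lambda_a$ growth cap.
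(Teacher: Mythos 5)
Your proposal is correct and takes essentially the same route as the paper: the paper's own proof is a direct, two-sentence reduction to Theorem~\ref{th:honest-adversarial-gap-bound}, observing that a fork from a block generated $T$ time ago wins with probability decaying exponentially in $T$. Your additional scaffolding --- the VDP-sequentiality argument ruling out pre-computation, pinning $\lambda_a$ to the stake distribution at the fork height, and the union bound over fork points --- goes beyond what the paper writes down but only elaborates the same core argument.
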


However, as discussed in the introduction, there is a more advanced form of long-range attack called the \textit{posterior corruption} attack. PoS protocols are constructed based on a assumption that stakeholders are motivated to keep the system running correctly since they have skins in the game. However, if a stakeholder has sold off his tokens, he no longer has this incentive. Once a sufficient portion of stakeholders from a point in time in the past are divested, there is a chance that they are willing to sell their private keys to an adversary. As the adversary acquires more past stakes, his private chain could grow faster. We note per Theorem \ref{th:basic-long-range-resistance}, if an adversary is not able to acquire more than $\frac{1}{1+e}$ fraction of total speed-weighted-stake at a certain block height in the past, his posterior corruption attack always fails. This is because his private fork can never catch up with the longest public chain, unlike in other PoS protocols without VDP. Below we discuss more general cases.

If an adversary can control more than $\frac{1}{1+e}$ fraction of total speed-weighted-stake, his private fork might indeed grow faster than the longest public chain. However, depending on the speed-weighted-stake he acquired, his private fork might not grow much faster than the longest public chain, and thus could take a long time before it catches up.  For quantitative analysis, let us model the validator staking events as a Poisson process where at each block height, any node has $p_{s}$ probability to stake tokens to become a validator. Then, we have the following result (proof provided in Appendix \ref{sec:proofs-long-range-attack}):

\begin{theorem} \label{th:adversary-stake-bound}
Assume the honest parties controls $\alpha_h$ fraction of stake at the latest confirmed block whose height is $l_c$. Then, at block height $l_a < l_c$, through posterior corruption, the expected maximum fraction of stake an adversary can potentially control is bounded by $p_{s} (l_c - l_a) \alpha_{h} + (1 - \alpha_{h})$.
\end{theorem}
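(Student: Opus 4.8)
The plan is to bound the adversary's reachable stake at height $l_a$ by partitioning the stake present at $l_a$ into a \textbf{protected} portion, which the adversary can never acquire, and a \textbf{corruptible} portion, which it can acquire in full, and then to control the size of the corruptible portion in expectation. I would define the protected portion to consist of exactly the \emph{continuously honest} validators, i.e. those that were already staked at height $l_a$ and are still among the honest parties at the latest confirmed height $l_c$. Such a validator still has skin in the game at $l_c$, so under the posterior-corruption model it will not divest and will not sell its still-valid private key; hence its stake at $l_a$ is unreachable. Every other unit of stake at $l_a$ is held by a party that has either divested by $l_c$ or is already adversarial, and in the worst case the adversary can purchase all of those keys. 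Thus the adversary's reachable fraction at $l_a$ is at most $1$ minus the protected fraction.

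It then suffices to lower bound the expected protected fraction. By the fixed-stake assumption each validator locks the same amount $S$, so fractions of stake coincide with fractions of validators, and the protected fraction equals the share of honest-at-$l_c$ validators that were already staked at $l_a$. I would let $H$ denote the set of validators honest at $l_c$, with aggregate stake fraction $\alpha_h$, and for each $v \in H$ introduce the indicator $X_v$ that $v$ first staked during the window $(l_a, l_c]$. Under the Poisson staking model a node stakes with probability $p_s$ at each of the $l_c - l_a$ heights in this window, so a union bound (equivalently Bernoulli's inequality applied to $1 - (1-p_s)^{l_c - l_a}$) gives $\mathbb{E}[X_v] \le p_s (l_c - l_a)$. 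Summing over $H$ and invoking linearity of expectation, the expected fraction of honest-at-$l_c$ stake that is \emph{new} (absent at $l_a$) is at most $p_s (l_c - l_a)\,\alpha_h$, so the expected protected fraction is at least $\alpha_h - p_s(l_c - l_a)\,\alpha_h$.

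Combining the two observations, the expected reachable fraction of the adversary at $l_a$ is at most
\begin{equation*}
1 - \bigl(\alpha_h - p_s(l_c - l_a)\,\alpha_h\bigr) = (1 - \alpha_h) + p_s(l_c - l_a)\,\alpha_h,
\end{equation*}
which is the claimed bound; the term $1 - \alpha_h$ accounts for stake that is already non-honest at $l_c$, and $p_s(l_c - l_a)\,\alpha_h$ for honest stake that only joined after $l_a$. I expect the main obstacle to lie in the first step rather than the arithmetic: making precise, under the stated corruption model, that \emph{every} non-continuously-honest unit of stake is genuinely acquirable (so the partition is tight), while simultaneously handling the fact that the validator set grows between $l_a$ and $l_c$, so the per-validator union bound must be reconciled with the change of normalization between the two heights. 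A careful treatment would fix the normalization at height $l_a$ and argue that arrivals during the window can only dilute, never enlarge, the protected share.
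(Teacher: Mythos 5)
Your proposal is correct and takes essentially the same route as the paper: the paper likewise protects exactly the validators that staked before $l_a$ and remain honestly staked at $l_c$, bounds the expected fraction of honest-at-$l_c$ stake that arrived during the window by $p_s(l_c - l_a)\,\alpha_h$ via indicator variables and linearity of expectation (your per-validator union bound corresponds to the paper's ``at most'' remark about a node staking and unstaking the same tokens twice), and then adds the $1-\alpha_h$ fraction the adversary already controls. The only cosmetic difference is that you bound the protected share and take its complement, whereas the paper counts the acquirable share directly; the arithmetic is identical.
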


For simplicity, let us analyze the scenario where the honest and adversary party have the same VDP solving speed. The more general cases can be analyzed similarly. Assume the adversary launches the posterior corruption based long-range attack from block height $l_a$. In the meanwhile, to increase the winning chance, he stops proposing blocks on the longest public chain. Thus, the longest chain only grows with average rate $\lambda_h$. On the other hand, the growth rate of the longest adversary fork is bounded by $e \cdot \frac{p_{s} (l_c - l_a) \alpha_{h} + (1 - \alpha_{h})}{\alpha_h} \cdot \lambda_h$. Let $T_a$ represent the minimum amount of time needed for the adversary fork to outgrow the longest public chain, the following condition needs to hold true:

\begin{equation*}
l_a + e \cdot \frac{p_{s} (l_c - l_a) \alpha_{h} + (1 - \alpha_{h})}{\alpha_h} \cdot \lambda_h T_a > l_c + \lambda_h T_a     
\end{equation*}

\noindent From this inequality, it is straightforward to derive that if the following inequality holds true, then the posterior corruption long-range attack from block height $l_a$ would have negligible chance to succeed. 

\begin{equation} \label{eq:long-range-fail-condition}
\begin{aligned}
    \alpha_h &> (1 + 1/e - p_s (l_c - l_a))^{-1}
\end{aligned}
\end{equation}

\noindent Otherwise, the attack would take at least 

\begin{equation} \label{eq:long-range-execution-time}
T_a > \frac{l_c - l_a}{\lambda_h (e p_s (l_c - l_a) + e / \alpha_h - e - 1)}
\end{equation}

\noindent amount of time to execute. To get a more concrete sense, let us plug in some numbers. Assume $\lambda_h = 144$ blocks/day (i.e. 10 minutes average block time similar to Bitcoin), and the honest stake fraction $\alpha_h = 0.9$, and $p_s = 1.903 \times 10^{-5}$ (i.e., on average an honest node stakes tokens for one year before unstaking, which is a reasonable estimation based on data points from a couple public PoS blockchains). Given these parameters, Inequality (\ref{eq:long-range-fail-condition}) holds true if $l_c - l_a < 13495$, which means the posterior corruption attack forking from any block less than 13495 blocks ago (approximately 93.7 days) would fail. Moreover, according to Inequality (\ref{eq:long-range-execution-time}), an attack forking from $l_c - l_a > 13495$ blocks ago would take at least $T_a > \frac{l_c - l_a}{0.00745 (l_c - l_a) - 100.507} = \frac{1}{0.00745 - 100.507 / (l_c - l_a)} \geq 1 / 0.00745 \approx 134.2$ days. For example, if the adversary forks from 50,000 blocks ago, the fork will take approximately 183.8 days to catch up the the longest public chain, which is a relatively long time. In comparison, for a PoS protocol without VDP, generating 50,000 consecutive blocks could just take minutes, if not seconds. Base on the analysis above, we can thus conclude that VDP effectively improves the long-range attack resistance for the consensus protocol.


\section{Conclusions and Future Works}

In this paper we introduced the concept of verifiable delay puzzle and proposed a new proof-of-stake/proof-of-delay hybrid consensus protocol built on top of it. It has the advantage of low energy consumption and can scale to accommodate tens of thousands of validators nodes similar to Bitcoin. It is also more resilient to the long-range attacks compared to existing PoS protocol proposals. However, we also recognize that there are large room for improvement. In particular, Security analysis of the protocol under more types of attacks such as the selfish-mining attack \cite{eyal2013selfishmining,gervais2016powsecurity} also needs to be conducted. Another extension is to incorporate other types fork choice rules such as the GHOST rule \cite{sompolinsky2015ghost}. Finally, transaction throughput improvement via alternative data structures like subchains \cite{rizun2016subchains} and DAG \cite{bagaria2018prism,yang2019prism10000,li2018conflux} would be an interesting area to explore.

%

%
%

\bibliographystyle{unsrt}
\bibliography{ms}

%
%

\begin{subappendices}
\renewcommand{\thesection}{\Alph{section}}%

\section{Analysis of the Slashing Rules} \label{sec:analysis-of-slashing-rules}

In this Appendix, we will prove that as stated by Theorem \ref{th:nothing-at-stake-immune} in Section \ref{sec:slashing-rules}, for the rational nodes, when there are forks, extending all the forks is never the best strategy. Instead, the dominant strategy is to only extend the longest fork in its local view. This result indicates that our protocol is robust against the ``nothing-at-stake" attacks.

\bigbreak
\noindent \textbf{A Simple Example}
\bigbreak

\begin{figure}[h!]
\centering
\includegraphics[width=0.85\textwidth]{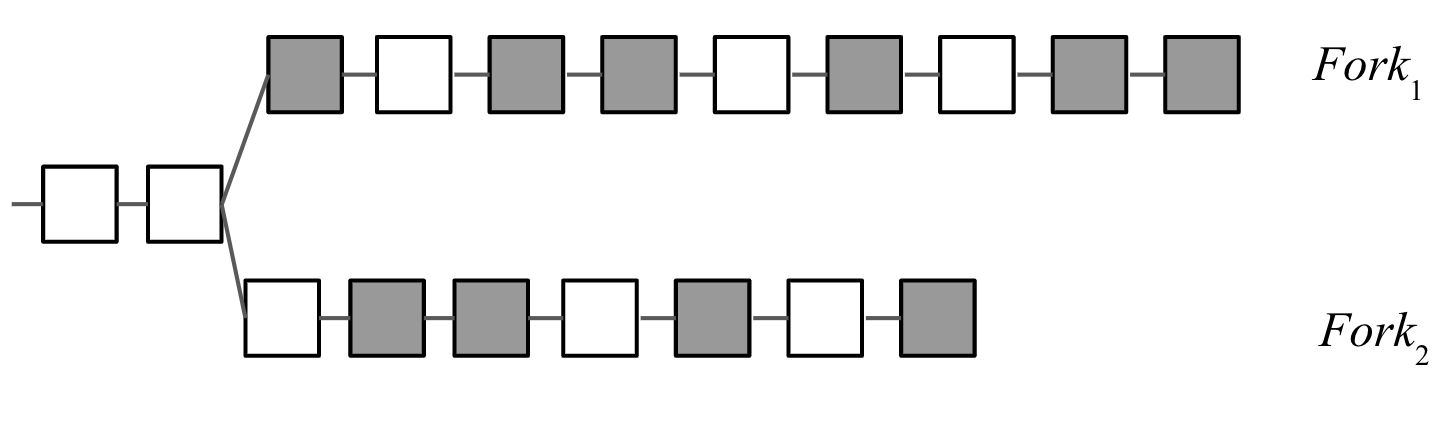}
\caption{Simple example with two forks. Validator $v$ can choose to publish 6 blocks (blocks in gray) on $Fork_1$, and 4 blocks on $Fork_2$. Or it can just stick to one single fork.}
\label{fig:slash_simple_case}
\end{figure}

First, let us use the simple case in Fig. \ref{fig:slash_simple_case} to demonstrate our slashing rules. In this example, the blockchain has two forks. The probability that $Fork_1$ wins is 0.7, and hence the probability that $Fork_2$ wins is 0.3. A validator $v$ can mine on both forks, where $Fork_1$ contains 6 blocks it mines, while $Fork_2$ contains 4 blocks. Alternatively, it can publish 6 blocks on $Fork_1$ and no block on $Fork_2$. It can also just mine 4 blocks on $Fork_2$ only.

If it mines on both forks, based on our proposed slashing rules, the expected reward is $0.7 \cdot (6R - 4(1 + \epsilon)R) + 0.3 \cdot (4R - 6(1 + \epsilon)R) = (0.8 - 4.6 \epsilon)R$. To maximize the total reward, validator $v$ should try to submit the slashing transactions by itself on both forks to claim the submitter reward. This is a bit counter-intuitive, but if $v$ does not submit the slashing transactions, someone else will do so, and $v$ is worst off. Hence, its expected maximum total reward by mining on both forks is $0.8R$. Yet if he only mines on $Fork_1$, his expected maximum total reward will be $0.7 \cdot 6R + 0.3 \cdot (-6R) = 2.4R$. Clearly, in this example, validator $v$ is better off if he only publishes blocks on $Fork_1$.

\bigbreak
\noindent \textbf{The General Two Forks Case}
\bigbreak

\begin{figure}[h!]
\centering
\includegraphics[width=0.85\textwidth]{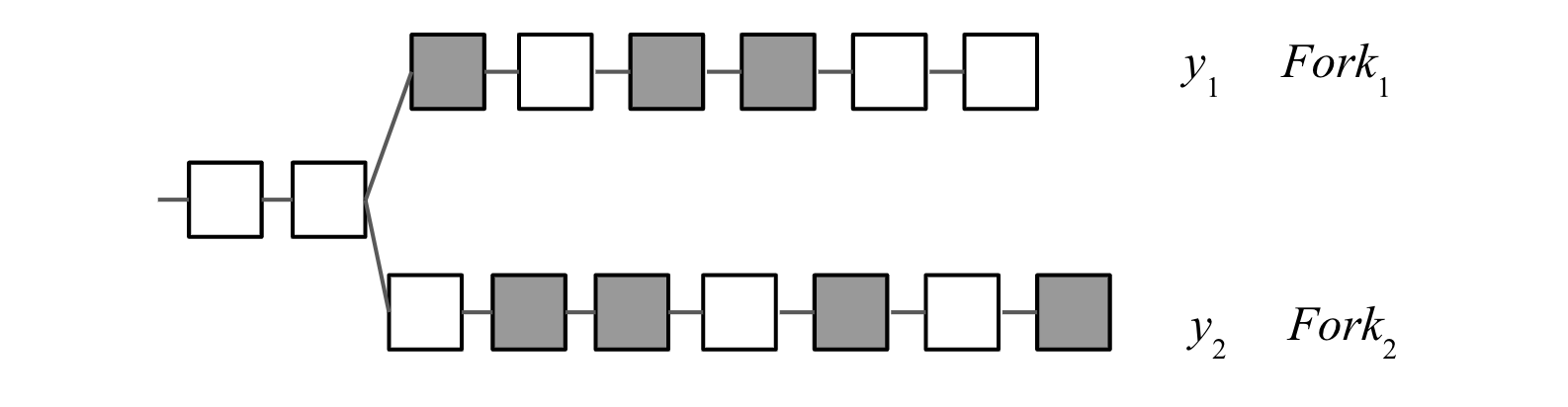}
\caption{The general two forks scenario. Similarly, validator $v$ has the options to publish on both forks, or stick to just one fork.}
\label{fig:slash_general_2_forks}
\end{figure}

Fig. \ref{fig:slash_general_2_forks} illustrates a scenario similar to Fig. \ref{fig:slash_simple_case}, but does not give the concrete winning probably of each fork. When validator $v$ sees the two forks, it can have three strategies:

\begin{itemize}
\item \textbf{Strategy 0}. Publish blocks on both forks, assume for the next $N$ block heights, it can mine $y_1$ blocks on $Fork_1$, and $y_2$ blocks on $Fork_2$.
\item \textbf{Strategy 1}. Publish $y_1$ blocks on $Fork_1$ only.
\item \textbf{Strategy 2}. Publish $y_2$ blocks on $Fork_2$ only.
\end{itemize}

Let us calculate the expected reward for each strategy. Suppose $p$ is the probability that $Fork_1$ becomes the winning chain. Then the probability that $Fork_2$ wins will be $(1-p)$. For \textbf{Strategy 0}, its total expected reward after the forking point is

\begin{equation} \label{eq:two_forks_strategy_0_reward}
p \cdot (y_1 - y_2 (1 + \epsilon)) \cdot R + (1 - p) \cdot (y_2 - y_1 (1 + \epsilon)) \cdot R 
\end{equation}

This is because if $Fork_1$ wins, $v$ gets $y_1 \cdot R$ block rewards, but its deposit gets slashed by $y_2 (1 + \epsilon) \cdot R$. Similarly if $Fork_2$ wins, $v$ gets $y_2 \cdot R$ block rewards, and yet its deposit gets deducted by $y_1 (1 + \epsilon) \cdot R$. Similar to the simple example, to maximize the total reward, validator $v$ should try to submit the slashing transactions by itself on both forks. Thus, for \textbf{Strategy 0}, after simplifying Formula (\ref{eq:two_forks_strategy_0_reward}), the maximum total expected reward is 

\begin{equation} \label{eq:two_forks_strategy_0_reward_simplified}
  E_{0}[R_{total}] = (2p - 1) \cdot (y_1 - y_2) \cdot R
\end{equation}

For \textbf{Strategy 1}, since $v$ only mines on $Fork_1$, if $Fork_1$ wins, it gets $y_1 \cdot R$ block rewards with no deposit slash. However, if $Fork_2$ wins, it obtains no block reward and yet gets slashed by $y_1 (1 + \epsilon) \cdot R$. If $v$ also submits the slashing transaction to claim the submitter reward, the maximum total expected reward will be $(p \cdot y_1 - (1 - p) \cdot y_1) \cdot R$, which simplifies to

\begin{equation} \label{eq:two_forks_strategy_1_reward}
  E_{1}[R_{total}] = (2p - 1) \cdot y_1 \cdot R
\end{equation}

Similarly, the maximum total expected reward for \textbf{Strategy 2} can be calculated by

\begin{equation} \label{eq:two_forks_strategy_2_reward}
  E_{2}[R_{total}] = (1 - 2p) \cdot y_2 \cdot R
\end{equation}

Notice that according to Formula (\ref{eq:two_forks_strategy_0_reward_simplified}-\ref{eq:two_forks_strategy_2_reward}), $E_{0}[R_{total}] = E_{1}[R_{total}] + E_{2}[R_{total}]$. On the other hand, $E_{1}[R_{total}] \cdot E_{2}[R_{total}] = - (2p - 1)^2 \cdot y_1 \cdot y_2 \cdot R^2 \leq 0$. This means that $E_{1}[R_{total}]$ and $E_{2}[R_{total}]$ can not be positive at the same time. Hence

\begin{equation}
  E_{0}[R_{total}] \leq \max \left\{E_{1}[R_{total}], E_{2}[R_{total}]\right\}
\end{equation}

Thus, mining on both forks is not the best strategy for validator $v$. This results can be extended to the cases with \textbf{more than two forks} easily with similar reasoning.

\bigbreak
\noindent \textbf{Multiple Disjoint Forks with Blocks Already Published}
\bigbreak

\begin{figure}[h!]
\centering
\includegraphics[width=0.85\textwidth]{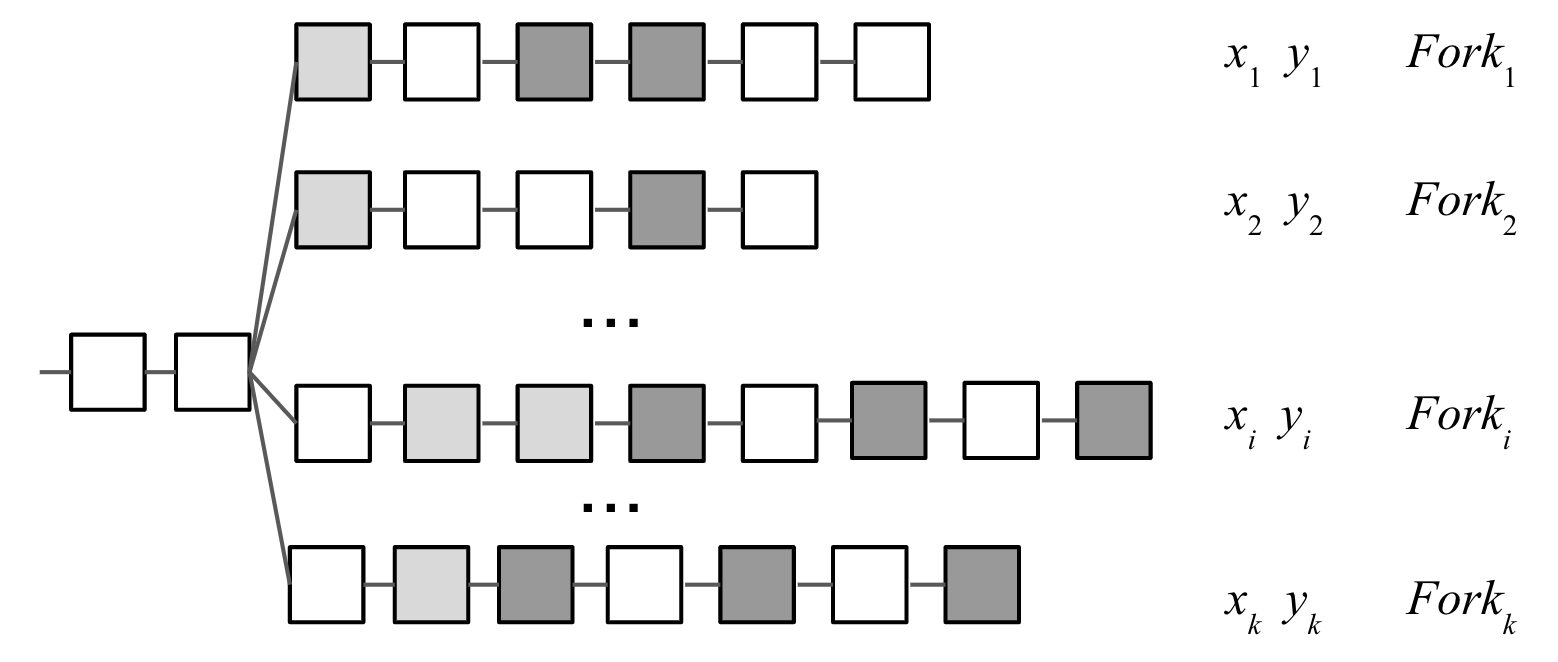}
\caption{The general multiple disjoint forks case. The light gray blocks represent the blocks validator $v$ has already published. And the dark gray blocks represent the blocks $v$ can choose to publish.}
\label{fig:slash_general_multiple_disjoint_forks}
\end{figure}

Fig. \ref{fig:slash_general_multiple_disjoint_forks} depicts a more general case where there are $k$ disjoint forks, and validator $v$ has already mined $x_i$ blocks on $Fork_i$. Now it has the following $k+1$ strategies: 

\begin{itemize}
\item \textbf{Strategy 0}. Publish blocks on all forks, assume for the next $N$ block heights, it can mine $y_i$ more blocks on $Fork_i$, in addtion to the $x_i$ blocks it already published on $Fork_i$.
\item \textbf{Strategy \textit{i}}. Publish $y_i$ more blocks on $Fork_i$ only.
\end{itemize}

Let us first analyze \textbf{Strategy 0}. Assuming the winning probability of $Fork_i$ is $p_i$, the maximum total expected reward can be calculated by

\begin{equation} \label{eq:multi_disjoint_forks_strategy_0_reward}
  E_{0}[R_{total}] = \sum^{k}_{i=1}{p_i \cdot \left(x_i + y_i - (X - x_i + Y - y_i)\right) \cdot R}
\end{equation}

Here $X = \sum^{k}_{i=1}{x_i}$, and $Y = \sum^{k}_{i=1}{y_i}$. Formula (\ref{eq:multi_disjoint_forks_strategy_0_reward}) can be simplified as

\begin{equation}
  E_{0}[R_{total}] = \left(2\sum^{k}_{i=1}{p_i \cdot (x_i + y_i)} - (X + Y)\right) \cdot R
\end{equation}

On the other hand, the maximum total expected reward of \textbf{Strategy \textit{i}} can be calculated by

\begin{equation}
  E_{i}[R_{total}] = \left(p_i \cdot \left(x_i + y_i - (X - x_i)\right) + \sum_{j \neq i}{p_j \cdot \left(x_j - y_i - (X - x_j)\right)} \right) \cdot R
\end{equation}

Simplifying this formula gets us

\begin{equation}
  E_{i}[R_{total}] = \left(2 \sum^{k}_{j=1}{p_j \cdot (x_j + y_j)} - \left(X + 2 \sum_{j \neq i}{p_j \cdot y_j} + y_i \right) \right) \cdot R
\end{equation}

Thus we have

\begin{equation}
  E_{i}[R_{total}] - E_{0}[R_{total}] = \left(\sum_{j \neq i}{(1 - 2 p_j) \cdot y_j} \right) \cdot R
\end{equation}

If $p_j < 1/2$ for all $j = 1, 2, .., k$, obviously $\left(\sum_{j \neq i}{(1 - 2 p_j) \cdot y_j} \right) \cdot R \geq 0$ for any $i$. Hence, for $E_{i}[R_{total}] \geq E_{0}[R_{total}]$ for any $i$. Otherwise, if there is a $i$ such that $p_i \geq 1/2$, then for all $j \neq i$, we must have $p_j \leq 1/2$, since $\sum_{j=1}{p_j} = 1$. Thus, for this $i$, $\left(\sum_{j \neq i}{(1 - 2 p_j) \cdot y_j} \right) \cdot R \geq 0$. Therefore, in any case, there must be a \textbf{Strategy \textit{i}} whose maximum total expected reward is at least as large as the mining-on-all-forks strategy. This means when there are multiple disjoint forks, even if a validator has already published some blocks on multiple forks, the best strategy is still to \textit{mine only on the fork with the largest winning probability}.

\section{Proofs for the Protocol Backbone Analysis} \label{sec:proofs-backbone}

Below we prove \textbf{Lemma \ref{lm:block-rate-speed-weight-stake}}:

\begin{proof}
Recall that the honest behavior can be viewed as a greedy strategy. The validators compete to solve the assigned VDP for the current block height. If one validator solves its assigned VDP for the current height and publish a new block, all other validators give up on the current VDP and move to VDP for the next block height. Note that such a VDP solving based block production process can be modeled as a Poisson process, where the success rate of each trial of $t$ is $\gamma$ per Formula \ref{eq:hash_threshold}. Thus, similar to Bitcoin mining, equipping with a solver $\eta$ times as fast as others is equivalent to have $\eta$ validators working in parallel, which increase the block production rate by $\eta$ times. Hence, as a Poisson process, the aggregated block production rate $\lambda$ (i.e. the total number of blocks proposed by all validators in a unit time) is proportional to $W = \sum_{i \in V}{sws_i}$.
\end{proof}

Below we prove \textbf{Lemma \ref{lm:number-of-honest-blocks-bound}}:

\begin{proof}
First we note that with the zero-latency assumption, the VDP solving process can be modeled as independent Bernoulli trials. Thus, the adversarial block production does not affect the honest block production. Thus, the honest block production process can be modeled as a Poisson process with rate $\lambda_h$. Thus, by using the Chernoff bound, it is straightforward to derive that $Pr(N_h(T) < (1 - \delta) \lambda_h T) \leq 2 e^{- \lambda_h T \delta^2 / 3}$.
\end{proof}

Below we prove \textbf{Lemma \ref{lm:longest-public-chain-growth-rate-bound}}:

\begin{proof}
Assume for a moment that all the adversarial parties stop producing blocks after block $B$. In this case, the honest block production process can be modeled as a Poisson process with rate $\lambda_h$. And the proposed honest blocks will form a chain. Using the Chernoff bound, it is straightforward to derive that for this hypothetical honest chain, $Pr(D^{\prime}_h(T) < (1 - \delta) \lambda_h T) \leq 2 e^{- \lambda_h T \delta^2 / 3}$, where $D^{\prime}_h(T)$ is the number of block after block $B$ on this hypothetical chain after time period $T$.

On the other hand, due to the zero-network latency assumption, even in the presence of adversarial validators, two honest blocks will never share the same block height. This is because once a honest validator publish a new block, all other honest validators will receive it instantly. Then, all these honest validators move to solve the VDP for the next block height. Hence, the longest chain should grow at least as fast as the hypothetical chain. Thus, $Pr(D_h(T) < (1 - \delta) \lambda_h T) \leq Pr(D^{\prime}_h(T) < (1 - \delta) \lambda_h T) \leq 2 e^{- \lambda_h T \delta^2 / 3}$.
\end{proof}

Below we prove \textbf{Theorem \ref{th:honest-adversarial-gap-bound}}:

\begin{proof}

First, since the total honest speed-weighted-stake $W_h > 1 - \frac{1}{1+e}$, we have $W_h / W_a > (1 - \frac{1}{1+e}) / (1 - (1 - \frac{1}{1+e})) = e$, meaning the speed-weighted-stake controlled by honest rational parties is at least $e$ time as much as that controlled by the adversarial parties. Therefore, according to Lemma \ref{lm:block-rate-speed-weight-stake}, we have $\lambda_h > e \lambda_a$. Further, let us denote constant $\nu = \frac{1}{3} (\lambda_h - e \lambda_a)$. Then, we have the following

\begin{align*} 
  &     Pr(N_h(T) - D_a(T) > \nu T) \\
  &\geq Pr(\{N_h(T) > \lambda_h T - \nu T\} \land \{D_a(T) < e \lambda_a T + \nu T\}) \\
  & =   Pr(N_h(T) > \lambda_h T - \nu T) \cdot Pr(D_a(T) < e \lambda_a T + \nu T) \\
  & =   (1 - Pr(N_h(T) < \lambda_h T - \nu T)) \cdot (1 - Pr(D_a(T) > e \lambda_a T + \nu T)) \\
  &\geq (1 - 2 e^{-\frac{1}{3 \lambda_h} \nu^2 T}) \cdot (1 - e ^ {- \nu T}) \\
  &\geq (1 - 2 e^{-\zeta T})^2
\end{align*}

\noindent where $\zeta = \max \{\frac{1}{3 \lambda_h} \nu^2, \nu \}$ is a positive constant. In the above derivation, the first step is because the joint event $\{N_h(T) > \lambda_h T - \nu T\} \land \{D_a(T) < e \lambda_a T + \nu T\}$ is a sufficient condition for event $\{N_h(T) - D_a(T) > \nu T\}$. Moreover, $\{N_h(T) > \lambda_h T - \nu T\}$ and $\{D_a(T) < e \lambda_a T + \nu T\}$ are independent events since the adversarial fork after $B$ does not contain any honest block. This yields the second step. We then apply Inequality (\ref{eq:D_h_t_bound}) and (\ref{eq:D_a_t_bound}). The last step is obviously since $e^{-x}$ is a homogeneously decreasing function of $x$. This proves Inequality (\ref{eq:N_h_D_a_bound}).

For Inequality (\ref{eq:D_h_D_a_bound}), we just need to note that as discussed earlier, due to the zero-latency assumption, no two honest blocks share the same block height. Thus, during time $T$, the longest public chain must have grown by at least $N_h(T)$ blocks, i.e. $D_h(T) > N_h(T)$. Hence, we have $Pr(D_h(T) - D_a(T) > \nu T) \geq Pr(N_h(T) - D_a(T) > \nu T) \geq (1 - 2 e^{-\zeta T})^2$. This proves Inequality (\ref{eq:D_h_D_a_bound}).

Thus, if the adversary forks the chain from a certain block, with high probability, the gap between the longest public chain and the longest adversarial private fork increases with time.
\end{proof}





\section{Proofs for the Long-Range Attack Analysis} \label{sec:proofs-long-range-attack}

Below we prove \textbf{Thereom \ref{th:basic-long-range-resistance}}:

\begin{proof}
This theorem can be derived directly from Theorem \ref{th:honest-adversarial-gap-bound}, which states $Pr(D_h(T) - D_a(T) > \frac{1}{3} (\lambda_h - e \lambda_a) T) \geq (1 - 2 e^{-\zeta T})^2$. Assume an adversary tries to launch a long-rage attack from a block generated $T$ time ago, then the probability that his private fork is longer than the the longest public chain decreases exponentially with $T$.
\end{proof}

Below we prove \textbf{Thereom \ref{th:adversary-stake-bound}}:

\begin{proof}
We make the reasonable assumption that if a validator staked before height $l_a$ and remain staked at $l_c$, then the adversary cannot acquire its private key, since otherwise the adversary would have control to the tokens the validator currently owns. Let us calculate the expected fraction of stake that were staked before $l$ and remain staked at $l_c$ out of the $\alpha_h$ fraction of stake the honest parties currently controlled. For the calculation, we introduce random variable $n_{v,l}$, which is 1 if honest validator $v$ conducted the staking at block height $l$, and 0 otherwise. Then the total number of honest staking events between height $l_a$ and $l_c$ can be calculated by $N_s = \sum^{l_c-1}_{l=l_a} \sum_{v \in V_h} n_{v, l}$, where $V_h$ is the set of honest validators at height $l_c$. Thus, the expected number $E[N_s] = E[\sum^{l_c-1}_{l=l_a} \sum_{v \in V_h} n_{v, l}] = \sum^{l_c-1}_{l=l_a} \sum_{v \in V_h} E[n_{v, l}] =  \sum^{l_c-1}_{l=l_a} \sum_{v \in V_h} p_s = p_s (l_c - l_a) N_h$. Here $N_h = |V_h|$ is the number of honest validators at height $l_c$. On the other hand, since at height $l_c$ the honest validators controls $\alpha_h$ fraction of stake, the total fraction of stake that the adversary can potentially acquire is \textbf{at most} $\frac{E[N_s]}{N_h} \cdot \alpha_h = p_{s} (l_c - l_a) \alpha_{h}$ (``at most'' because if a node $v$ stakes and unstakes the same token twice during $l_a$ to $l_c$, the adversary can only acquire one token). This, plus the $1 - \alpha_{h}$ fraction of stake the adversary already owned without bribing at height $l_a$, gets us the $p_{s} (l_c - l_a) \alpha_{h} + (1 - \alpha_{h})$ bound. 
\end{proof}

\end{subappendices}

\end{document}